\newcommand{\Prob}{\mathbb{P}}
\newcommand{\Ex}{\mathbb{E}}
\newcommand{\Var}{\mathrm{Var}}
\newcommand{\Cov}{\mathrm{Cov}}
\newcommand{\Mult}{\mathit{Mult}}
\DeclareMathOperator{\diag}{diag}
\newcommand{\bmp}{\bm{p}}
\newcommand{\bmu}{\bm{u}}
\newcommand{\bmx}{\bm{x}}
\newcommand{\bmX}{\bm{X}}
\newcommand{\bbmX}{\bar{\bm{X}}}
\newcommand{\ett}{\mathbbm{1}}
\newcommand{\calS}{\mathcal{S}}
\newcommand{\calR}{\mathcal{R}}
\newcommand{\calT}{\mathcal{T}}
\newtheorem{theorem}{Theorem}
\newtheorem{lemma}{Lemma}
\newtheorem{remark}{Remark}
\newtheorem{proposition}{Proposition}
\theoremstyle{definition}
\title{How to ask sensitive multiple choice questions}
\author{Andreas Lager{\aa}s\thanks{AFA Insurance, Sweden} \thanks{Department of Mathematics, Stockholm university, SE 106 91 Stockholm, Sweden} \and Mathias Lindholm\footnotemark[2] \thanks{Corresponding author: \href{mailto:lindholm@math.su.se}{lindholm@math.su.se}}}
\begin{document}

\maketitle

\begin{abstract}
Motivated by recent failures of polling to estimate populist party support, we propose and analyse two methods for asking sensitive multiple choice questions where the respondent retains some privacy and therefore might answer more truthfully. The first method consists of asking for the true choice along with a choice picked at random. The other method presents a list of choices and asks whether the preferred one is on the list or not. Different respondents are shown different lists. The methods are easy to explain, which makes it likely that the respondent understands how her privacy is protected and may thus entice her to participate in the survey and answer truthfully. The methods are also easy to implement and scale up.
\end{abstract}

\section{Introduction}

When asking someone about a personal deed or preference one would expect her to be less truthful, or willing to provide an answer, the more she thinks that that deed or preference is illegal or shameful. This could be one of the reasons many opinion polls have underestimated the public support for populist parties and candidates in several countries in recent years. The concrete example that motivates us is the larger than expected support for the Sweden Democrats (SD), a nationalist party, in the Swedish general election 2014.

This phenomenon could work in two ways to skew the result of a poll. Firstly, respondents may choose not to participate at all. This compounds the larger problem of reaching a representative sample of the population. Secondly, the respondents' answers may be biased towards less sensitive preferences. In this paper we suggest methods that let the respondents keep some privacy by introducing noise or asking for a less exact answer. Ideally this would affect both the participation rate and the bias, but we only quantitatively analyse the possible reduction in bias.

The issue of reaching a representative sample of respondents, due to both demographic reasons and the sensitivity of the questions asked, has been raised recently in the Swedish press \cite{lindholm2017dnDebatt}. For in-depth analysis of UK elections, see \cite{sturgis2017assessment,sturgis2016report} that comment on possible problems with sensitive answers (``shy Tories'') and also highlight the importance of using proper random samples.

Our particular interest is in genuine multiple choice questions, i.e.\ where the number of choices is larger than two. This is necessary in the context of a parliamentary system with proportional representation such as the Swedish one. In elections with a first-past-the-post system there are usually only two dominating parties. 

In the situation where there are only two choices of interest, say in an effectively two-party system, or when one wants to estimate a single proportion in a population, e.g.\ the proportion that has committed a certain crime, there are several methods to provide anonymity, such as the Randomised Response Technique (RRT) introduced by \cite{warner1965randomized}.

In multiple choice situations this technique is not directly applicable, but there are multiple choice extensions such as e.g.\ \cite{abul1967multi,eriksson1973new}, as well as other techniques for scrambling data reported by the persons contributing to the poll which can be found in e.g.\ \cite{chaudhuri2013indirect} and the references therein. These methods can be perceived as ``weird'' or hard to explain to the respondent due to the need of a complicated randomisation device, that the respondents true answer is neglected, or that unrelated questions are asked, see e.g.\ \cite{kuha2014item} and the references therein. This also raises the concern that these types of methods will in practice be difficult to implement on a sufficiently large scale, which is needed in the context of assessing voting intentions in nationwide general elections.

This motivates the need for methods to handle polling of sensitive multiple choice questions. The methods we describe focus on simple practical implementation. In particular the methods allows for (i) some degree of anonymity for the respondent, (ii) simple or no randomisation device, (iii) simple questions and (iv) the possibility of using automated surveys. Point (iv) is important, since this implies that it is inexpensive to scale up the size of the survey.

We will introduce and analyse two methods which we call the ``pair method'' and the ``list method''. In the pair method, the respondent is asked to name her party preference and another party chosen uniformly at random, and to reply with both parties in random order. The randomisation is done privately by the respondent. This method is similar to the methods of \cite{esponda2009surveys, esponda2016statistical}. For the list method, the respondent is presented with a list of several parties and asked whether her preferred party is on the list or not. Different respondents are presented with different lists. As opposed to the pair method, the list method falls into the category of ``non-randomized'' response techniques, see e.g.\ \cite{tan2009sample,yu2008two}.

The pair method provides anonymity since a respondent with the sensitive preference will also reply with a non-sensitive one, so that from the point of view of the interviewer she could also have had the non-sensitive preference as her true preference and only responded with the sensitive one by chance. The respondent has some plausible deniability. The list method provides even more anonymity than the pair method if the presented lists contain more than two parties. It is also easier to implement since it requires no randomisation by the respondent.

The flip side of anonymity is that less information is gained from each respondent and thus a larger sample is needed. The pair method is more efficient than the list method in this regard. We analyse the methods both from the perspective of the level of anonymity provided and the efficiency lost.

In the above argumentation we have only referred to ``anonymity'' in colloquial terms without defining this in more detail. Recall that for the simple RRT introduced in \cite{warner1965randomized} there is only a single sensitive answer in a dichotomous response situation. In \cite{leysieffer1976respondent} it was suggested that one could measure the degree of anonymity in the RRT setting using a measure called ``jeopardy'', which relates to how much information is revealed concerning the sensitive answer. In the present paper we discuss this measure in relation to the information theoretical concept of entropy, see e.g.\ \cite{kendall1973entropy,kullback1959statistics,mackay2003information}. In particular we discuss the problem with measures similar to jeopardy in the situation where, in the extreme case, all answers may be regarded as sensitive. An example is that people may be reluctant to reveal their true voting intention regardless of which political party they will place their vote on.

The objective of the methods is to provide anonymity in order to reduce bias. We therefore focus on unbiased estimators, which in our case turn out to be maximum likelihood estimators. Even though we do not perform a Bayesian analysis, it is worth noting that the measures of anonymity we use require some reference to the distribution of preferences in the population, so that if one wants to reason about the respondents' perceived anonymity one must consider their a priori views on this distribution. On the other hand, if you are reluctant to specifying, or not interested in, the perceived level of privacy, no opinion concerning the true a priori distribution of votes is needed.

Note that both suggested methods are very simple to explain to a respondent. In the language of \cite{kuha2014item} we believe that the suggested methods provide low levels of ``weirdness''. Moreover, as opposed to standard RRTs the respondent will {\it always} provide her true voting intention -- this is important, not the least w.r.t.\ the problem of getting respondents to participate at all, see e.g.\ \cite{lindholm2017dnDebatt,sturgis2017assessment,sturgis2016report}. That is, if a pollster manages to contact a reluctant potential respondent we believe that it is crucial that the respondent understands, at least intuitively, what is meant with ``anonymity'' and that the respondent's true intention is accounted for. From a practical perspective we believe that the pair method probably is more suitable to use either in face-to-face situations, since the pollster may provide a suitable randomization device, e.g.\ a box with cards where all parties are represented, or in a web based survey where the randomisation can be done in the respondent's browser. The list method, due to its yes/no character even could be implemented using cell phone text messages (SMS) -- ``{\it Would you consider voting for any of the political parties in the list provided below? Please reply to this text message with `Yes' or `No'.}''. Moreover, the list method is based on random sampling, and we believe it is inexpensive to implement using automated surveys. This is an important feature w.r.t.\ the concerns raised in \cite{sturgis2017assessment,sturgis2016report} on non-randomized sampling.

\section{Measures of anonymity and information}\label{sec:anon}

\subsection{Entropy, information, and privacy}\label{sec:entropy}

We will introduce the concepts entropy and information. For more on these topics see e.g.\ \cite{mackay2003information}. Let $T$ be a discrete random variable with probability function $p_T(t) \coloneqq \Prob[T=t]$, $t\in\mathcal{T}$, and define its \emph{entropy} by
$$
H[T] \coloneqq -\Ex_T[\log_2 p_T(T)] = -\sum_{t \in \calT} p_T(t) \log_2 p_T(t).
$$
The entropy measures the uncertainty about the outcome of $T$ in the sense that it gives bounds for the average number of yes/no-questions that are necessary to ascertain the outcome. The unit of measurement is called \emph{bit}. More specifically, with $Q(T)$ being the necessary number of dichotomous questions needed to ascertain the outcome,
$$
H[T] \leq \Ex[Q(T)] < H[T]+1.
$$
In our context, we think of $T$ as the true voting intentions of a randomly chosen respondent, i.e.\ the distribution of $T$ has support on a set of political parties. If the interviewer could only ask yes/no-questions of the kind ``does your preferred party belong to the set $\mathcal{S}$'' for different sets of parties $\mathcal{S}$, then the expected number of questions she would need would lie between $H[T]$ and $H[T]+1$. From the definition of $H[T]$ it is also clear that its maximum is attained when $p_T(t) = 1/|\calT|$ for all $t \in \calT$.

The respondent is afforded some degree of anonymity or privacy if she does not have to divulge all information about her intentions, but rather retain some bits of entropy. Let $R$ be another discrete random variable having joint probability function $p_{T,R}(t,r)$ with $T$. We will think of $R$ as the respondent's answer to the interviewer.

With $p_{T\mid R}(t\mid r)$ being the conditional probability function, we can define the \emph{joint entropy}
$$
H[T,R] \coloneqq -\Ex_{T,R}[\log_2 p_{T,R}(T,R)],
$$
the \emph{conditional entropy}
$$
H[T\mid R] \coloneqq -\Ex_{T,R}[\log_2 p_{T\mid R}(T\mid R)] = H[T,R] - H[R] 
$$
and the mutual information
$$
I[T;R] \coloneqq H[T] + H[R] - H[T,R] = H[T] - H[T\mid R].
$$

The entropy $H[T]$ then measures the interviewer's uncertainty about the voting intentions of a respondent \emph{before} responding and $H[T\mid R]$ the uncertainty \emph{after} having responded. The mutual information $I[T;R]$ measures how much the uncertainty has decreased due to receiving an answer. In other words, the conditional entropy $H[T\mid R]$ measures the amount of retained privacy and the mutual information $I[T;R]$ measures the amount of divulged information.

One can also note that $I[T;R]$ may be re-written according to
\begin{align*}
I[T;R] &= \sum_{t \in \calT, r \in \calR} p_{T,R}(t,r)\log_2\frac{p_{T,R}(t,r)}{p_T(t)p_R(r)}\\
&= D_{KL}(p_{T,R}~ ||~ p_Tp_R),
\end{align*}
where $D_{KL}(F~ ||~ G)$ corresponds to the {\it Kullback-Leibler divergence} between the probability distributions $F$ and $G$, see e.g.\ \cite[p.\ 34]{mackay2003information}. Further, it is clear that $D_{KL}(F~ ||~ G) \ge 0$, with equality iff $F = G$. Thus, $I[T; R] \ge 0$ and we only have equality iff $T$ and $R$ are statistically independent, i.e.\ by knowing $R$ no information is gained w.r.t.\ $T$ and vice versa. In our setting this corresponds to complete anonymity and will never be possible to attain for the methods below.

In the case $R=T$, when the respondent tells the interviewer her precise voting intentions, $H[T,R] = H[T]$ so that $H[T\mid R] = 0$ and there is no residual uncertainty or privacy. Likewise $I[T;R] = H[T]$, meaning all information has been divulged.

It must be noted that the measures of entropy and information are population averages. The individual respondent might be more interested in $-\log_2 p_T(t)$, which measures how uncommon her intention $t$ is and how much information about herself she would give away by revealing that. Likewise, she might only want to participate in the survey if $-\log_2 p_{T\mid R}(t\mid r)$, measuring her retained privacy, is high for all possible answers $r$ that she might be prompted by the survey design to give to the interviewer. This should be kept in mind when designing the survey.

The description above also made the tacit assumption that the distribution of $T$ is common knowledge. If that were the case, the survey wouldn't be needed in the first place! In order to obtain unbiased answers it is important that the respondents' perceived privacy is protected to some extent, and that means that one must consider the respondents' subjective distributions of $T$, and possibly their beliefs about the interviewer's belief etc. There is no way to quantify these subjective probabilities so we proceed pragmatically and assume that there is a rough agreement in the population about the distribution of $T$.

The two proposed methods described in this paper are easily analysed within this framework. Note that a single yes/no-question divulges at most one bit since the entropy of a two-point distribution is less than or equal to one, with equality in the case of equidistribution, as for a fair coin toss. Indeed, in the pairs method, when each respondent provides her true voting intention together with a randomly chosen other party, it only takes a single additional yes/no-question to ascertain her true intention, viz.\ ``is your true preference the first of the two parties in the pair?'' Therefore the retained privacy is at most one bit.

The list method is on the other side of the spectrum since it only asks a single yes/no-question, and therefore the amount of divulged information is at most one bit. If the lists are chosen to have support of close to half of the population the divulged information is close to one bit.

\subsection{Jeopardy}\label{sec:jeopardy}

Above the concept of mutual information was introduced as a measure of how much information about $T$ is revealed by answering $R$. As will become clear later on, we are mainly interested in situations where $R$ may be seen as a function of $T$. That is, given that a respondent's voting preference is $t$ its answer will follow the distribution $p_{R \mid  T}(r \mid t)$ for all $r \in \calR$, where $\calR$ is the set of all possible responses. Thus, if we let $\calS$ be the set of sensitive or stigmatizing preferences, it is clear that by applying Bayes' theorem we get
\[
\frac{p_{T \mid R}(\calS \mid r)}{p_{T \mid R}(\calS^c \mid r)} = \frac{p_{R \mid T}(r \mid \calS)}{p_{R \mid  T}(r \mid \calS^c)}\frac{p_T(\calS)}{p_T(\calS^c)},
\]
where
\[
p_T(\mathcal A) := \sum_{t \in \mathcal A}p_T(t), \quad \mathcal A \subset \mathcal{T},
\]
and
\[
p_{T | R}(\mathcal A \mid r) := \sum_{t \in \mathcal A}p_{T | R}(t | r), \quad \mathcal A \subset \mathcal{T},
\]
which rephrased in terms of information by using $\log_2$ yields
\begin{align}\label{eq : discriminating information}
\log_2\frac{p_{R \mid T}(r \mid \calS)}{p_{R \mid  T}(r \mid \calS^c)} = \log_2\frac{p_{T \mid R}(\calS \mid r)}{p_{T \mid R}(\calS^c \mid r)} - \log_2\frac{p_T(\calS)}{p_T(\calS^c)},
\end{align}
see e.g.\ \cite[Eq.\ (2.3)]{kullback1959statistics}. The ratio on the left-hand side of \eqref{eq : discriminating information} is what is called jeopardy, which was introduced in \cite{leysieffer1976respondent}:
\[
J(r) := \frac{p_{R \mid T}(r \mid \calS)}{p_{R \mid  T}(r \mid \calS^c)}, \quad \Prob(T \in \calS \cup \calS^c) = 1.
\]
$J(r)$ measures how much the unconditional odds have changed by answering $r$, or, in other words, how much the respondent is jeopardized by answering $r$. $J(r)$ is therefore called the \emph{jeopardy} with respect to $\calS$ \cite{leysieffer1976respondent}. In \cite{leysieffer1976respondent} one motivation for $J(r)$ in the case with only a dichotomous sensitive question is that $J(r)$ is independent of the true population proportions $p_i$, i.e.\ $J(r)$ only depends on the design probabilities of how responses are distributed given the respondent's position to the (single) sensitive answer. It is, however, important to note that in the situation when there are more than one sensitive alternative, $J(r)$ {\it will} depend on the true underlying population proportions as well. Hence, this situation applies to polling in political systems with more than two political parties where respondents want to keep their vote secret.

Another interpretation of \eqref{eq : discriminating information} is in terms of likelihood ratios, since $T \in \calS$ or $T \in \calS^c$ may be seen as two different parametrisations of a probability law, see e.g.\ \cite[pp.\ 4--5]{kullback1959statistics}. Hence, \eqref{eq : discriminating information} tells us how much information is revealed in favour of $T \in \calS$ opposed to $T \in \calS^c$ when providing the response $R = r$.

Further, as noted above, $J(r)$ is a measure of jeopardy for a single response. In order to overcome this limitation it has been proposed in \cite{chaudhuri2009protection} to measure the overall jeopardy of a survey design by averaging over the set $\mathcal{R}$ of all possible responses, and to consider the quantity
$$
\bar J \coloneqq \frac{1}{| \mathcal{R} |}\sum_{r\in\mathcal{R}}J(r).
$$

\begin{remark}

Recall the information theoretic interpretation of $I[T;R]$ in terms of Kullback-Leibler divergence and consider $\bar J$ and $J(r)$. A natural information theoretic extension of $J(r)$ for the situation with multiple sensitive answers is to consider
\[
\Ex_{R \mid T \in \calS}[\log_2 J(R)] = D_{KL}(p_{R \mid T \in \calS}~ ||~ p_{R \mid T \in \calS^c}) \ge 0,
\]
which corresponds to the mean information in favour of $T \in \calS$ over $T \in \calS^c$ when assuming that $T \in \calS$. For more on this, see e.g.\ \cite[Eq.\ (2.5), p.\ 5]{kullback1959statistics} and the surrounding discussion.
\end{remark}

\subsection{Variance, bias and bias-detection in practice}\label{sec:variance_bias}

In the case of elections a benchmark estimator $\tilde p_i$ for each party $i$ is the one obtained from fitting a binomial distribution, i.e.\ $\tilde p_i$ is the fraction of respondents who say they will vote for party $i$:
\[
n\tilde p_i \sim \mathit{Bin}(n, p_i - b_i),
\]
where $b_i$ is the bias ($\sum_ib_i = 0$).

Let $\hat p_i$ be an unbiased estimator of $p_i$, based on some anonymisation procedure. One would expect that $\Var[\hat p_i] > \Var[\tilde p_i]$ when both estimators are derived from samples of equal size, since the anonymity decreases the precision. However, this might be a price worth paying if anonomity produces a large enough reduction.

%
%

We will now describe how we may use both $\hat p_i$ and $\tilde p_i$ to detect bias in practice. Let us for a moment assume that we may calculate $\Var[\hat p_i]$ as well as obtain a Gaussian approximation of $\hat p_i$, we can calculate the power of the test of 
\[
\left\{\begin{array}{l}
\text{$H_0$: $b_i = 0$}\\
\text{$H_1$: $b_i > 0$}
\end{array}\right.
\]
based on the plug-in statistic
\begin{align}
T_i(b_i) := \frac{\hat p_i - \tilde p_i - b_i}{\sqrt{\Var_{\hat p_i}[\hat p_i] + \Var_{\tilde p_i}[\tilde p_i]}}  \sim \mathrm{asym.}~ N(0,1)\label{eq : power-statistic},
\end{align}
given that the true bias is $b_i$ and where $\Var_p[\cdot]$ is calculated assuming the true proportion is given by $p$. Note that the statistic $T_i(b_i)$ from \eqref{eq : power-statistic} is only truly computable in practice given that $b_i = 0$. The asymptotic power of this test is hence given by
\begin{align}
\pi_\gamma(b_i) :=&~ \Prob(\mathrm{reject}~ H_0 \mid H_1)\nonumber\\
=&~ \Prob(T_i(0) > z_\gamma \mid H_1: b_i > 0) \nonumber\\
\stackrel{\mathrm{asym.}}{=}&~ 1 - \Phi(z_{1-\gamma} - \frac{b_i}{\sqrt{\Var_{\hat p_i}[\hat p_i] + \Var_{\tilde p_i}[\tilde p_i]}}).\label{eq : power-bias}
\end{align}
Thus, by using \eqref{eq : power-bias} we can e.g.\ assess the size of $n$ needed in order to obtain a specific level of power to detect a certain level of bias.

An important remark concerning \eqref{eq : power-bias} is that  $\pi_\gamma(b_i)$ is a decreasing function in $\sqrt{\Var_{\hat p_i}[\hat p_i] + \Var_{\tilde p_i}[\tilde p_i]}$, which in itself typically is a decreasing function in terms of $p_i$. This, hence, implies that it ought to be easier to detect bias for parties with few intended voters.

\section{The multinomial distribution}\label{sec:mult}

Since it is easier to work with a multinomial distribution rather than a multivariate hypergeometric distribution, we will assume that we poll with replacement or that the population size is infinite. This is an innocuous assumption for the applications we have in mind. We recall some well-known properties of the multinomial distribution.

Let $\bmX = (X_1,\dots,X_N)'\sim \Mult(n,\bmp)$ with $\bmp = (p_1,\dots,p_N)'$ and $\sum_k p_k = 1$ so that
\begin{align*}
\Prob(\bmX = \bmx) &= \binom{n}{x_1,\dots,x_N}p_1^{x_1}\cdots p_N^{x_N} \\
\Ex[\bmX] &= n\bmp \\
\Var[\bmX] &= n(\diag(\bmp) -\bmp\bmp')
\end{align*}
To ease the notation we introduce $V(\bmp) \coloneqq \diag(\bmp) -\bmp\bmp'$. The maximum likelihood estimator of $\bmp$ is 
\[
\tilde\bmp:=\frac1n\bmX,
\]
with
\begin{align}
\Ex[\tilde\bmp] &= \bmp \notag\\
\Var[\tilde\bmp] &= \frac1n V(\bmp) \label{ml_var}
\end{align}
Let $\mathcal{A}(M,N)$ be the set of $M\times N$ matrices with non-negative elements and column sums all equal to 1. 

The different protocols we describe all have in common that we want to draw inference about a vector of probabilities $\bmp=(p_1,\dots,p_N)'$ when the data comes from a multinomial distribution $\Mult(n,\bmu)$ with $\bmu=A\bmp$ for a known matrix $A$, or more generally, when the data comes from independent multinomials $\Mult(n_i,A_i\bmp)$ for $i=1,\dots,L$. We need $A_i\in\mathcal{A}(M_i,N)$ so that $\bmu_i$ is a vector of probabilities.
\begin{lemma}\label{lemmaUB}
Let $\bm{X}_i = (X_{i1},\dots,X_{iK_i})\sim \Mult(n_i,\bm{u}_i)$ independently for $i=1,\dots,L$, with $\bm{u}_i = A_i\bm{p}$ for given $A_i\in\mathcal{A}(K_i,N)$ and $\alpha_i := n_i / n$ where $n = \sum_i n_i$. Denote by 
$$
A \coloneqq \begin{pmatrix} \alpha_1 A_1 \\ \vdots \\ \alpha_L A_L \end{pmatrix}\quad\text{and}\quad \bbmX \coloneqq  \begin{pmatrix} \bmX_1 \\ \vdots \\ \bmX_L \end{pmatrix}.
$$
If $A$ has rank $N$, then
\begin{equation}\label{phat_def}
\hat{\bm{p}} \coloneqq \frac{1}{n}(A'A)^{-1}A'\bbmX
\end{equation}
is an unbiased estimator of $\bmp$ with variance
\begin{align}
\Var[\hat{\bmp}] &= \frac{1}{n}(A'A)^{-1}\bigg(\sum_{i=1}^L \alpha_i^3 A_i'V(A_i\bmp)A_i\bigg)(A'A)^{-1}. \label{phat_var2}
\end{align}
In the case $L=1$ with $A:=A_1$ and $n\coloneqq n_1$, $\hat\bmp=\frac1n A^+\bmX_1$ and
$$
\Var[\hat{\bmp}] = \frac1n\big((A'A)^{-1}A'\diag(A\bmp)A(A'A)^{-1} -\bmp\bmp'\big).
$$
\end{lemma}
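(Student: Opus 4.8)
The plan is to read $\hat\bmp$ as a single linear function of the stacked data vector $\bbmX$ and then propagate the first two moments of the independent multinomials through this map. Write $B \coloneqq \frac{1}{n}(A'A)^{-1}A'$, so that $\hat\bmp = B\bbmX$; the hypothesis that $A$ has rank $N$ guarantees that the $N\times N$ matrix $A'A$ is invertible, which is exactly what makes $B$ well defined. Everything then reduces to computing $\Ex[\bbmX]$ and $\Var[\bbmX]$ and forming $B\,\Ex[\bbmX]$ and $B\,\Var[\bbmX]\,B'$. For unbiasedness I would first record $\Ex[\bmX_i] = n_i A_i\bmp = n\alpha_i A_i\bmp$, so that stacking the blocks gives $\Ex[\bbmX] = nA\bmp$ straight from the definition of $A$. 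Then $\Ex[\hat\bmp] = B\,nA\bmp = (A'A)^{-1}(A'A)\bmp = \bmp$, as claimed.

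For the variance the two structural observations are that independence of the $\bmX_i$ makes $\Var[\bbmX]$ block diagonal with $i$-th block $\Var[\bmX_i] = n_i V(A_i\bmp) = n\alpha_i V(A_i\bmp)$, and that $A'$, and hence $B$, inherits a conformable horizontal block structure whose $i$-th block carries a factor $\alpha_i$ (coming from the $\alpha_i A_i$ appearing in $A$). I would then expand $\Var[\hat\bmp] = B\,\Var[\bbmX]\,B'$ block by block. The $i$-th summand collects one factor $\alpha_i$ from the left block of $B$, the factor $n\alpha_i$ from the $i$-th variance block, and a further factor $\alpha_i$ from the right block of $B'$, together with the three scalar factors $\tfrac1n\cdot n\cdot\tfrac1n = \tfrac1n$; the cross terms vanish by block diagonality. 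Pulling the outer $(A'A)^{-1}$ out on both sides and summing over $i$ yields the stated expression with the cubic weights $\alpha_i^3$.

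For the case $L=1$ I would set $\alpha_1=1$ and $A=A_1$. Since $A$ has full column rank $N$, the matrix $(A'A)^{-1}A'$ is precisely the Moore--Penrose pseudoinverse $A^+$, giving $\hat\bmp = \frac1n A^+\bmX_1$. For the variance I would substitute $V(A\bmp)=\diag(A\bmp)-(A\bmp)(A\bmp)'$ into the general formula: the diagonal part reproduces the first term verbatim, while the rank-one part simplifies because $(A'A)^{-1}A'(A\bmp)=\bmp$, so the second term collapses to $\bmp\bmp'$, matching the claim.

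The argument is essentially routine moment propagation through a linear map, so I do not anticipate a genuine obstacle; the one place that demands care is the bookkeeping in the block expansion — correctly tracing where each factor $\alpha_i$ enters so as to arrive at the weight $\alpha_i^3$, and keeping the three separate factors of $n$ straight. Beyond that, the proof rests only on $\Ex[\bmX_i]$, $\Var[\bmX_i]$, independence, and the identity $(A'A)^{-1}(A'A)=I$.
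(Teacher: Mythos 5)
Your proposal is correct and follows essentially the same route as the paper's proof: compute $\Ex[\bbmX]=nA\bmp$ and the block-diagonal $\Var[\bbmX]$ with blocks $n\alpha_i V(A_i\bmp)$, propagate them through the linear map $\tfrac1n(A'A)^{-1}A'$, and track the three factors of $\alpha_i$ to obtain the $\alpha_i^3$ weights. The $L=1$ specialization via $V(A\bmp)=\diag(A\bmp)-(A\bmp)(A\bmp)'$ and $(A'A)^{-1}A'A\bmp=\bmp$ is likewise exactly the paper's argument.
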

See Section \ref{sec:proofs} for proof of this Lemma.

Note that the definition of $\hat \bmp$ from \eqref{phat_def} is very natural w.r.t.\ $\bmu = A\bmp$, since $\hat \bmp$ is merely the standard least squares regression coefficient estimator of $\bmp$. Moreover, due to this interpretation it is reasonable that the variance of $\hat \bmp$ will also average out deviations between observations relating to individual $p_i$'s --- manifested by the multiplication from left and right by $(A'A)^{-1}$, which is a global quantity affecting all components of $\bmp$. That is, given that we do not obtain direct observations of individual $p_i$'s, which in our situation is a consequence of anonymisation, poor precision relating to one component of $\bmp$ will to some degree contaminate the remaining estimators' precision as well. Also note that $\hat \bmp$ is a linear transformation of $\bbmX$ and hence there is no guarantee that $\hat p_i$ is non-negative for all $i$. Still, it is important to note that since $\bbmX/n$ is a maximum likelihood estimator of $A\bmp$ it follows that the above $\hat \bmp$ is an ML estimator of $\bmp$.

Further, given Lemma \ref{lemmaUB} it is reasonable to expect that there exists a corresponding central limit theorem, which there is:

\begin{lemma}\label{lemma: CGS}
Let $\hat{\bm{p}}$ be defined according to Lemma \ref{lemmaUB}. Assume that
\[
\alpha_i = \frac{n_{i}}{n}\to \tilde\alpha_i> 0, \quad \text{as } n_i,n\to\infty,
\]
for all $i$ such that $\sum_i \tilde\alpha_i = 1$ and denote
\[
\tilde A \coloneqq \begin{pmatrix} \tilde\alpha_1 A_1 \\ \vdots \\ \tilde\alpha_L A_L \end{pmatrix}.
\]
Then
\begin{align*}
\sqrt{n}(\hat{\bm{p}}  - \bm{p}) \stackrel{D}{\to} \mathbf{G},\quad \text{as } n,n_i\to\infty, \quad \text{for all } i,
\end{align*}
where $\mathbf{G}$ is multivariate Gaussian with mean $\mathbf{0}$ and covariance
\[
\Var[\mathbf{G}] = (\tilde{A}'\tilde{A})^{-1}\bigg(\sum_{i=1}^L \tilde{\alpha}_i^3 A_i'V(A_i\bmp)A_i\bigg)(\tilde{A}'\tilde{A})^{-1}.
\]
\end{lemma}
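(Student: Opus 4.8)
The plan is to reduce the statement to the classical multivariate central limit theorem for a single multinomial vector, applied blockwise across $i=1,\dots,L$, followed by a linear transformation; the only genuine subtlety is that this transformation matrix itself depends on $n$ through the weights $\alpha_i$.

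First I would rewrite the scaled error using Lemma \ref{lemmaUB}. Since the $i$th block of $nA\bmp$ is $n\alpha_i A_i\bmp = n_i A_i\bmp = \Ex[\bmX_i]$, we have $\Ex[\bbmX] = nA\bmp$, and because $(A'A)^{-1}A'A = I$ the unbiasedness computation gives
\[
\sqrt{n}(\hat\bmp - \bmp) = (A'A)^{-1}A'\,\frac{1}{\sqrt{n}}\big(\bbmX - \Ex[\bbmX]\big).
\]
Everything is therefore driven by the centred and scaled stacked vector $\bm Z_n \coloneqq n^{-1/2}(\bbmX - nA\bmp)$ and the deterministic map $M_n \coloneqq (A'A)^{-1}A'$. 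Next I would establish the joint asymptotic normality of $\bm Z_n$. Its $i$th block is
\[
\frac{1}{\sqrt{n}}\big(\bmX_i - n_i A_i\bmp\big) = \sqrt{\alpha_i}\,\frac{1}{\sqrt{n_i}}\big(\bmX_i - n_i\bmu_i\big),
\]
and since $\alpha_i\to\tilde\alpha_i>0$ forces $n_i\to\infty$, the multinomial CLT yields $n_i^{-1/2}(\bmX_i-n_i\bmu_i)\stackrel{D}{\to}N(\mathbf 0,V(A_i\bmp))$; combined with $\sqrt{\alpha_i}\to\sqrt{\tilde\alpha_i}$ and Slutsky, the $i$th block converges to $N(\mathbf 0,\tilde\alpha_i V(A_i\bmp))$. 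As the $\bmX_i$ are independent across $i$, the blocks are asymptotically jointly Gaussian and independent, so $\bm Z_n\stackrel{D}{\to}\bm Z\sim N(\mathbf 0,\Sigma)$ with $\Sigma=\diag(\tilde\alpha_1 V(A_1\bmp),\dots,\tilde\alpha_L V(A_L\bmp))$.

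The step I expect to require the most care is passing the $n$-dependent map $M_n$ through the limit, since one cannot simply apply a fixed linear transformation to a convergent-in-distribution sequence. I would first argue that $\tilde A$ has full column rank $N$, so that $(\tilde A'\tilde A)^{-1}$ exists and $M_n\to M\coloneqq(\tilde A'\tilde A)^{-1}\tilde A'$ as a deterministic limit: indeed $A\bm v=\mathbf 0$ iff $A_i\bm v=\mathbf 0$ for all $i$ (as every $\alpha_i>0$), and the identical characterisation holds for $\tilde A$, so $A$ and $\tilde A$ share the same kernel and hence the same rank $N$; on this full-rank set the entries of $M_n$ are continuous functions of the $\alpha_i$. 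Writing $M_n\bm Z_n = M\bm Z_n + (M_n-M)\bm Z_n$, the first term converges in distribution to $M\bm Z$ by the continuous mapping theorem, while $\bm Z_n$ is tight and $M_n-M\to 0$, so $(M_n-M)\bm Z_n\stackrel{P}{\to}\mathbf 0$; Slutsky then gives $\sqrt{n}(\hat\bmp-\bmp)\stackrel{D}{\to}M\bm Z$.

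Finally, $M\bm Z$ is Gaussian with mean $\mathbf 0$ and covariance $M\Sigma M'$, which I would identify with the stated $\Var[\mathbf G]$. Since the $i$th block of $\tilde A$ is $\tilde\alpha_i A_i$,
\[
\tilde A'\Sigma\tilde A = \sum_{i=1}^L \tilde\alpha_i A_i'\,\big(\tilde\alpha_i V(A_i\bmp)\big)\,\tilde\alpha_i A_i = \sum_{i=1}^L \tilde\alpha_i^3 A_i' V(A_i\bmp)A_i,
\]
so that $M\Sigma M' = (\tilde A'\tilde A)^{-1}\big(\sum_i \tilde\alpha_i^3 A_i' V(A_i\bmp)A_i\big)(\tilde A'\tilde A)^{-1}$, as claimed. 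This mirrors the finite-$n$ variance in \eqref{phat_var2}, with $A$ and $\alpha_i$ replaced by their limits $\tilde A$ and $\tilde\alpha_i$, which is exactly what the Slutsky argument delivers.
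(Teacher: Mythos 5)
Your proposal is correct and follows the same overall route as the paper: the same decomposition $\sqrt{n}(\hat\bmp-\bmp)=(A'A)^{-1}A'\,n^{-1/2}(\bbmX-nA\bmp)$, the same blockwise multinomial CLT rescaled by $\sqrt{\alpha_i}\to\sqrt{\tilde\alpha_i}$ via Slutsky, the same stacking by independence, and the same identification of $M\Sigma M'$ with the stated covariance. The one place you diverge is the final step, where the transformation matrix depends on $n$: the paper invokes an extended continuous mapping theorem for a convergent sequence of maps $h_n\to h$ (citing Kallenberg, Thm.~3.27), whereas you handle it elementarily by splitting $M_n\bm Z_n=M\bm Z_n+(M_n-M)\bm Z_n$ and killing the second term by tightness of $\bm Z_n$ together with $M_n\to M$. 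Both are valid; your version is self-contained and avoids the external citation, and you additionally make explicit a point the paper glosses over, namely that $\tilde A$ has full column rank $N$ (via the kernel argument $\ker A=\{v:A_iv=0\ \forall i\}=\ker\tilde A$ when all $\alpha_i,\tilde\alpha_i>0$), which is needed for $(\tilde A'\tilde A)^{-1}$ to exist and for $M_n\to M$ to make sense. That is a worthwhile addition rather than a deviation.
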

The proof of Lemma \ref{lemma: CGS} is based on a standard central limit theorem for the multinomial $\bmX_i$s, which combined with Slutsky's theorem and a general version of the continuous mapping theorem which provides convergence for sequences of mappings yields the desired result. A detailed proof is given in Section \ref{sec:proofs}.

\section{Introducing two new anonymised survey methods with open answers: The Pair method and the List method}

\subsection{Pair method}

Consider the following polling protocol: each individual participating in the poll is asked to name the party she intends to vote for together with an additional party chosen uniformly at random amongst the remaining $N-1$ parties. The answers are reported unordered, i.e.\ the interviewer does not know which party is the true vote intention.

Note that we here assume that there is only a single sensitive party to vote for. That is, it should not be possible to obtain a voting pair $\{i,j\}$ where both parties are regarded as being sensitive. We will return to this situation when we discuss the list method.

The possible answers are the $M := \binom{N}{2} = \frac{N(N-1)}{2}$ unordered pairs $\{1,2\}$, $\{1,3\}$, \dots, $\{N-1,N\} = P_1,\dots,P_M$. When we need to order all pairs we always use this lexicographic ordering, i.e.\ $\{i,j\}$, where $i<j$. Let $b_{ik}=1$ if $i\in P_k$ and $b_{ik} = 0$ otherwise.

Assuming that the true voting intentions have the frequencies $p_1,\dots,p_N$, the probability of receiving the answer $\{i,j\}$ is given by 
\begin{align*}
u_{ij} \coloneqq p_i\tfrac{1}{N-1} + \tfrac{1}{N-1}p_j = \tfrac{1}{N-1}(p_i+p_j)
\end{align*}
since either $i$ or $j$ must be the true intention with respective probability $p_i$ and $p_j$, and in either case the random choice, $j$ and $i$ respectively, has probability $\frac{1}{N-1}$. If the poll size is $n$ and $X_{ij}$ is the number of answers $\{i,j\}$, then clearly $\bmX = (X_{12},X_{13},\dots,X_{N-1,N})'\sim \Mult(n,\bmu)$ with $\bmu := (u_{12},u_{13},\dots,u_{N-1,N})'$. The ML estimator for $\bmu$ is $\hat\bmu\coloneqq \frac1n\bmX$. This can be used to derive an unbiased estimator of $\bmp$, since we can re-write $\bmu = A\bmp$, where $A$  is the $N\times M$ matrix  defined according to:
$$
A \coloneqq \frac{1}{N-1}B' = \frac{1}{N-1}\begin{pmatrix} 1 & 1 & 0 & \dots & 0 & 0 \\
                                         1 & 0 & 1 & \dots & 0 & 0\\
                                         \vdots & \vdots & \vdots & & \vdots & \vdots \\
                                         0 & 0 & 0 & \dots & 1 & 1 \end{pmatrix},
$$
where $B_{ik} = b_{ik}$ with $b_{ik}$ given as above. That is, by using the above definition of $A$ it is clear that Lemma \ref{lemmaUB} applies to the pair method, which yields the following explicit result:

\begin{theorem}\label{thm_pair}
The estimator
\begin{equation}\label{phat_par}
\hat{p}_i \coloneqq \frac{N-1}{N-2}\sum_j \hat u_{ij} - \frac{1}{N-2}
\end{equation}
is an unbiased estimator of $p_i$, $i = 1,\ldots,N$, with
\begin{align}
\Var[\hat p_i] &= \frac1n\left(\frac{1+(N-3)p_i}{(N-2)} -p_i^2\right)\label{phat_par_var}\\
\Cov[\hat p_i, \hat p_j ] &= -\frac1n\left(\frac{1-p_i-p_j}{(N-2)^2}+p_ip_j\right),\quad i\neq j.\label{phat_par_cov}
\end{align}
\end{theorem}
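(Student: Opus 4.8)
The plan is to reduce everything to the one-dimensional summary statistic $S_i := \sum_j \hat u_{ij}$, the fraction of reported pairs that contain party $i$, since the estimator in \eqref{phat_par} is an affine function of $S_i$ alone. First I would establish the form and unbiasedness. Each reported answer either contains party $i$ or not, so $nS_i = (B\bmX)_i$ counts the answers containing $i$, and $\Ex[S_i] = \sum_j u_{ij} = \tfrac{1}{N-1}\sum_{j\neq i}(p_i+p_j) = \tfrac{(N-2)p_i+1}{N-1} =: \mu_i$, using $\sum_{j\neq i}p_j = 1-p_i$. Since $\mu_i$ is an invertible affine function of $p_i$, solving gives $p_i = \tfrac{N-1}{N-2}\mu_i - \tfrac{1}{N-2}$, so the plug-in estimator $\hat p_i = \tfrac{N-1}{N-2}S_i - \tfrac{1}{N-2}$ is unbiased by linearity of expectation, matching \eqref{phat_par}.

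This also coincides with the estimator $\tfrac1n(A'A)^{-1}A'\bmX$ of Lemma \ref{lemmaUB}, as the surrounding text asserts: the only nontrivial ingredient is $(A'A)^{-1}$, and since $A=\tfrac{1}{N-1}B'$ one has $A'A = \tfrac{1}{(N-1)^2}BB'$ with $BB' = (N-2)I + \ett\ett'$ (each party lies in $N-1$ pairs, and each pair of distinct parties shares exactly one common answer), which is inverted by the standard rank-one formula $(aI+b\ett\ett')^{-1} = \tfrac1a I - \tfrac{b}{a(a+bN)}\ett\ett'$. After using $\sum_i S_i = 2$ (each answer is counted once for each of its two parties) the all-ones part collapses to the constant $-\tfrac{1}{N-2}$, recovering the same formula.

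For the variance I would avoid the matrix product $(A'A)^{-1}A'\diag(A\bmp)A(A'A)^{-1}$ and instead exploit the binomial structure of $S_i$: whether a given respondent's answer contains party $i$ is a Bernoulli trial with success probability $\mu_i$, independently across the $n$ respondents, so $nS_i \sim \mathit{Bin}(n,\mu_i)$ and $\Var[S_i] = \mu_i(1-\mu_i)/n$. Hence $\Var[\hat p_i] = \big(\tfrac{N-1}{N-2}\big)^2 \mu_i(1-\mu_i)/n$; substituting $1-\mu_i = \tfrac{(N-2)(1-p_i)}{N-1}$ and expanding produces $\tfrac1n\big(\tfrac{1+(N-3)p_i}{N-2} - p_i^2\big)$, which is \eqref{phat_par_var}. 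The covariance then follows from the multinomial covariance recalled in Section \ref{sec:mult}: writing $\bm S = B\hat\bmu$ with $\Var[\hat\bmu] = \tfrac1n V(\bmu)$, I get $\Cov[S_i,S_j] = \tfrac1n(BV(\bmu)B')_{ij} = \tfrac1n\big(\sum_{k:\,i,j\in P_k} u_k - \mu_i\mu_j\big)$ for $i\neq j$, and since $\{i,j\}$ is the unique common pair the first sum is simply $u_{ij} = \tfrac{p_i+p_j}{N-1}$. Then $\Cov[\hat p_i,\hat p_j] = \big(\tfrac{N-1}{N-2}\big)^2 \Cov[S_i,S_j]$, and expanding $u_{ij}-\mu_i\mu_j$ yields \eqref{phat_par_cov}.

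I expect the only real obstacle to be bookkeeping: correctly reading off the combinatorial counts that give $BB' = (N-2)I + \ett\ett'$ and $\sum_i S_i = 2$, and then not slipping in the final algebraic simplifications of $\mu_i(1-\mu_i)$ and $u_{ij}-\mu_i\mu_j$. The key observation that removes all matrix inversion from the variance computation is that $\hat p_i$ depends on the data only through the marginal count $nS_i$, which is exactly binomial.
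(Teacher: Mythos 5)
Your proposal is correct, and it reaches the result by a genuinely more elementary route than the paper. The paper stays entirely inside the machinery of Lemma \ref{lemmaUB}: it guesses the Moore--Penrose inverse $A^+ = \frac{N-1}{N-2}B - \frac{1}{N-2}J_{N,M}$, verifies $A^+A=I$ via $BB'=(N-2)I+J_{N,N}$, identifies the estimator \eqref{phat_par} with $A^+\hat\bmu$, and then computes every entry of $A^+\diag(A\bmp)A^{+\prime}$ through the quadratic form $\sum_k d_k c_{ik}c_{jk}$ with $c_{ik}=\frac{(N-1)b_{ik}-1}{N-2}$ --- a fairly heavy index computation. You instead observe that $\hat p_i$ is an affine function of the single marginal count $nS_i=\sum_{k:i\in P_k}X_k$, which is $\mathit{Bin}(n,\mu_i)$ as a marginal of the multinomial; unbiasedness is then a one-line method-of-moments inversion of $\mu_i=\frac{(N-2)p_i+1}{N-1}$, the variance drops out of $\mu_i(1-\mu_i)/n$, and the covariance reduces to $(BV(\bmu)B')_{ij}/n$ where the ``unique common pair'' observation makes $(B\diag(\bmu)B')_{ij}=u_{ij}$ immediate. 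I checked the algebra in all three computations (including $1-\mu_i=\frac{(N-2)(1-p_i)}{N-1}$ and the expansion of $u_{ij}-\mu_i\mu_j$) and it matches \eqref{phat_par_var} and \eqref{phat_par_cov}. What the paper's route buys is that unbiasedness and the covariance come packaged as a special case of the general lemma, so the identification of $\hat\bmp$ with the least-squares/ML estimator $\frac1n(A'A)^{-1}A'\bmX$ --- which is what feeds the central limit theorem of Lemma \ref{lemma: CGS} --- is automatic; in your argument that identification is a separate supplementary step, which you do supply correctly via Sherman--Morrison and $\sum_i S_i=2$, so nothing is lost. Your route is shorter and makes the probabilistic structure (a binomial marginal) visible, at the cost of treating the moments ad hoc rather than as instances of \eqref{L1_var}.
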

A detailed proof of Theorem \ref{thm_pair} is given in Section \ref{sec:proofs}.

\subsection{List method}\label{sec:list}

If there are two sensitive choices, the pair method will for some respondents produce an answer with both those choices. To reintroduce some deniability, the method could be extended to triples so that each true preference is accompanied by two random choices. If there are three sensitive choices, one could ask for quadruples, and so on. This quickly becomes unwieldy.

Instead of asking the respondent to produce a list of several parties, where one is the true preference and all other are random, the interviewer might simply show a list of parties and ask if the respondent's preference is on the list or not. If the interviewer picks lists from a well-constructed set of lists, it is possible to derive an estimate of the population preferences. We will shortly describe what we mean with ``well-constructed''. Note that we assume a fixed set of $N$ possible choices so that a no-answer means that the preferred party must be on the complementary list of parties.

If the respondent answers truthfully, the probability of a yes-answer is $u_i \coloneqq \sum_{k\in \mathfrak{L}_i^+}p_k$ when she is presented with a list $\mathfrak{L}_i^+$. We can put this in the notation of Section \ref{sec:mult} if we let $A_i$ be a $2\times N$ matrix with elements $a_{1k} = \ett\{k\in\mathfrak{L}_i^+\}$ and $a_{2k}=\ett\{k\notin\mathfrak{L}_i^+\}$, so that $\bmu_i = (u_i, 1-u_i)'=A_i\bmp$ and the number of (yes, no) answers from asking $n_i$ people about the list $\mathfrak{L}_i^+$ is $\Mult(n_i,\bmu_i)$. Note that the first row of $A_i$ codes for membership in the list $\mathfrak{L}_i^+$ and the second codes for membership in the complementary list $\mathfrak{L}_i^- \coloneqq \{i:i\notin \mathfrak{L}_i^+\} = \{1,\dots,N\}\setminus \mathfrak{L}_i^+$.

In order to apply Lemma \ref{lemmaUB} we need the matrix $A$ that stacks all $L$ matrices $A_1$ to $A_L$ for the lists $\mathfrak{L}_1^+$ to $\mathfrak{L}_L^+$ to have rank $N$, i.e.\ the space spanned by its $N$ column must have full dimension $N$. This is what we mean with a ``well-constructed'' set of lists.

In order to see how this can be done in practice, let us consider the case with $N = 4$, the first situation which provides anonymity when there is only a single sensitive alternative, by constructing lists with two parties in each. That is, we have $\binom{N}{2} = 6$ such combinations of $\mathfrak{L}_i^+$ and $\mathfrak{L}_i^-$, i.e.\ 3 lists in total, which stacked gives us
$$
A \coloneqq  \begin{pmatrix} \alpha_1 A_1\\
\alpha_2 A_2\\
\alpha_3 A_3 \end{pmatrix} = \begin{pmatrix} \alpha_1 & \alpha_1 & 0  & 0 \\
                                         0 & 0 & \alpha_1 & \alpha_1\\
                                         \alpha_2 & 0 & \alpha_2 & 0 \\
                                         0 & \alpha_2 & 0 & \alpha_2\\
                                         \alpha_3 & 0 & 0 & \alpha_3\\
                                         0 & \alpha_3 & \alpha_3 & 0\end{pmatrix}, \quad \text{where } \alpha_i>0\text{ and }\sum_i\alpha_i=1.
$$
$A$ has full column rank (= 4). Thus, we can write $\bmu = A\bmp$ and Lemma \ref{lemmaUB} applies, which directly gives us that $\hat \bmp$ given by \eqref{phat_def} is an unbiased estimator of $\bmp$ together with computational formulas for the estimator's covariance.

Consequently, in the general situation, if we have a well-constructed set, the estimator $\hat\bmp$ defined in Equation \eqref{phat_def} is unbiased and has variance given by \eqref{phat_var2}. We can compare this variance with that of the ordinary ML-estimate of $\bmp$ for some different sets of lists. As $N$ grows the number of well-constructed sets grows exponentially. 

One idea is to construct lists with a priori voting support close to 50 \% (maximize anonymity) or with, as close to, equally many parties on all lists (low weirdness). Another criterion used to choose lists could be to minimise the variance. Note that all these possibilities amounts to choosing the $\alpha_i$'s in a certain way. Here one can also note that the $\alpha_i$'s allow for ex post calibration of potential non-response. In order to avoid too many subjective choices we will from now on primarily focus on the situation where we use all $\binom{N}{\lfloor N/2\rfloor}$ possible lists with (close to) equally many parties on each list and its complementary list.

An interesting result with an even number of parties is the following:

\begin{proposition}\label{prop:lika_varians}
Let $N=2M$ be even and the set of lists include all $L=\binom{N-1}{M-1}=\binom{N}{M}/2$ lists with $M$ parties that include party one, so that the set of complementary lists also cover $M$ parties each, and all exclude party one. The lists and complementary lists then cover all $\binom{N}{M} = \binom{N}{N/2}$ combinations of half of the $N$ parties. Further, assume that $\alpha_l = \frac1L$ for all $l$. Then all estimators $\hat p_i$ of the list method have the same variance.
\end{proposition}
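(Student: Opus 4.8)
The plan is to evaluate the variance formula \eqref{phat_var2} directly, exploiting the heavy symmetry of the design while isolating the single place where the parity $N=2M$ is essential. First I would simplify the building blocks. For a single list the matrix $A_i$ has rows $\ett_{\mathfrak{L}_i^+}'$ and $\ett_{\mathfrak{L}_i^-}'$, and since $A_i\bmp=(u_i,1-u_i)'$ we have $V(A_i\bmp)=u_i(1-u_i)\bigl(\begin{smallmatrix}1&-1\\-1&1\end{smallmatrix}\bigr)$. A short computation then collapses each summand to a rank-one form,
\[
A_i'V(A_i\bmp)A_i = u_i(1-u_i)\,\bm{d}_i\bm{d}_i',\qquad \bm{d}_i:=\ett_{\mathfrak{L}_i^+}-\ett_{\mathfrak{L}_i^-},
\]
so that $\bm{d}_i$ is the $\pm1$ vector equal to $+1$ on the list and $-1$ on its complement. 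With $\alpha_i=1/L$ throughout, the inner matrix of \eqref{phat_var2} becomes $W:=L^{-3}\sum_{i=1}^L u_i(1-u_i)\bm{d}_i\bm{d}_i'$.

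Next I would pin down the outer factor $(A'A)^{-1}$. Because the lists together with their complements run through every half-size subset of $\{1,\dots,N\}$ exactly once, one gets $A'A=\sum_i\alpha_i^2 A_i'A_i = L^{-2}\sum_{|S|=M}\ett_S\ett_S'$. Counting how many half-size subsets contain a fixed party, namely $\binom{N-1}{M-1}$, and how many contain a fixed pair, namely $\binom{N-2}{M-2}$, shows that $A'A=L^{-2}(aI+bJ)$ with $I$ the identity and $J$ the all-ones matrix. Hence $(A'A)^{-1}=cI+dJ$ for explicit scalars $c,d$, the required invertibility being precisely the well-constructedness already assumed.

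The crux is then to read off two facts about $W$. Every $\bm{d}_i$ has entries $\pm1$, so each $\bm{d}_i\bm{d}_i'$ has all-ones diagonal and the diagonal of $W$ is the constant $L^{-3}\sum_i u_i(1-u_i)$. Moreover, since $N=2M$ the list and its complement both have cardinality $M$, so $\bm{d}_i$ has as many $+1$'s as $-1$'s and thus $\ett'\bm{d}_i=0$; consequently every row sum of $\bm{d}_i\bm{d}_i'$, and therefore of $W$, vanishes. Expanding $(cI+dJ)W(cI+dJ)$ along the diagonal gives
\[
\bigl(\Var[\hat\bmp]\bigr)_{ii}=\tfrac1n\Bigl(c^2 W_{ii}+2cd\textstyle\sum_k W_{ik}+d^2\sum_{k,l}W_{kl}\Bigr),
\]
and the middle term drops out by the zero-row-sum property, while the remaining two terms are manifestly independent of $i$. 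Hence all $\Var[\hat p_i]$ coincide.

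The main obstacle is conceptual rather than computational: the inner matrix $W$ genuinely depends on $\bmp$ through the factors $u_i(1-u_i)$ and is \emph{not} of the symmetric $aI+bJ$ form, so the bare permutation symmetry of the design does not by itself force the diagonal of the sandwiched product to be constant for arbitrary $\bmp$. The equality survives only because of the parity-driven identity $\ett'\bm{d}_i=0$, which is exactly what annihilates the cross-term; keeping precise track of where evenness $N=2M$ is used is the part of the argument that needs care.
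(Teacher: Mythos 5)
Your proposal is correct and follows essentially the same route as the paper's proof: you identify $(A'A)^{-1}$ as a matrix of the form $cI+dJ$ from the symmetry of the complete half-size design, write each summand $A_i'V(A_i\bmp)A_i$ as $u_i(1-u_i)\bm{d}_i\bm{d}_i'$ (which is exactly the paper's $q_iC_i$ with $C_i=2A_i'A_i-J_{N,N}$), and then use the two key facts — constant all-ones diagonal of $\bm{d}_i\bm{d}_i'$ and the parity-driven annihilation of $J$ via $\bm{1}'\bm{d}_i=0$ — to see that the diagonal of the sandwiched product is constant. The only cosmetic difference is that you expand the quadratic form entrywise where the paper proves the matrix identity $M_N(a,b)C_iM_N(a,b)=(a-b)^2C_i$, and your closing remark correctly pinpoints that evenness of $N$ enters precisely through $\bm{1}'\bm{d}_i=0$.
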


The value of the common variance depend on the true $\bmp$, but it is still remarkable that one can get a common margin of error for all parties.

\subsection{On relations to other methods}

Let us consider the pair method when there is only a single sensitive answer. Assume that our aim is to maximize the plausible deniability of this sensitive voting intention, belonging to party 1, say. Given any answer $r_{1j}$ we could consider replacing the uniform probability with which your true voting intention is paired with an anonymising scrambling vote. Let $\pi_{ij}$ denote the probability that you add $j$ as a possible voting preference to your answer given that your true voting intention is $i$. A partial analog of a standard Forced Response Technique, see \cite{boruch1972relations}, is obtained by setting $\pi_{i1} \equiv 1, i\neq 1$, and let $\pi_{1j}\in [0,1], \sum_j\pi_{1j} = 1$. That is, given that your true voting intention is $1$ you add another possible voting preference $j$ according to $\pi_{1j}$, but if your true voting intention is $i\neq 1$ you will always, i.e.\ you are ``forced'' to,  add the sensitive voting alternative $1$ to your response. Thus, given this mechanism for producing response pairs, you may only give an answer in the set $\{1,j\}, j=2,\ldots,N$. It is, however, also clear from construction that this is only a reasonable method given that there is exactly one sensitive voting preference. Moreover, the choice of $\pi_{ij}$ still remains. Given no a priori knowledge of $p_i$ a natural choice of $\pi_{1j}$ is still $\pi_{1j} = 1/(N-1)$, see also the discussion on entropy in Section \ref{sec:entropy} and \ref{sec : entropy numeric}. 

One may also note that the pair method coincides with the ``Equiprobable Design Matrix'' method in \cite{esponda2016statistical}, although their definition of the method is in some sense complementary. That is, the pair method in their setting corresponds to choosing $N-2$ parties uniformly at random amongst those not containing the true voting intention. Hence, the remaining set which is not chosen contains two parties where one is the true voting intention together with an additional vote. On the other hand, this additional vote can due to exchangeability be seen as chosen uniformly at random amongst the $N-1$ parties not containing the true vote.

The same method from \cite{esponda2016statistical} covers answering with $n$-tuples with $n-1$ random choices added to the true preference.

\section{Evaluation of the Pair method and the List method: entropy, anonymity, bias and real-world examples}

We will now apply the theory from Section \ref{sec:anon} on the pair method and the list method. Focus will be on results and conclusions which can be drawn from these -- all derivations are given in the appendix. We will, when relevant, provide analytical expressions, but also numerical expressions for the situation when the true voting intention is uniform, i.e.\ $p_i = 1/N$ for all $i$, as well as $p_i$ corresponding to the general election in Sweden in 2014. The reason for analysing the situation with uniform voting preferences is that this situation will have maximal entropy, i.e.\ $H[T] = \log_2 N$, and hence have maximal anonymity. Moreover, in the Swedish general election of 2014 the voting preference of the Sweden Democrats (SD) turned out to be the hardest to estimate correctly. Due to this our numerical illustrations for measures of privacy will be w.r.t.\ to SD. The outcome of the Swedish general election from 2014 is given in Table \ref{tab: SWE election}. For the numerical illustrations we use equally balanced lists, i.e.\ a design with all lists that have 5 choices out of the possible 10. Thus, the number of lists and complementary lists is $\binom{10}{5} = 252$.

\subsection{Entropy}\label{sec : entropy numeric}

All entropy calculation for the pair and the list method are given in Table \ref{tab: theoretical entropy} for easy reference. To start off, recall that $I[T;R]$ is the expected reduction in uncertainty of $T$ due to knowledge of $R$. Due to construction, we expect that the pair method will disclose more information than the list method. By inspecting the results in the situation with equal voting preferences it is seen that the pair method only retains a single bit of information given an answer whereas the list method only reveals one bit. This is reasonable, since given that a respondent answers truthfully it is clear that the true voting intention is contained in the pair with which she has answered. Thus, a single yes/no question remains in order to fully disclose the respondents true voting preference. Concerning the list method it is on the other hand clear that very little information will be revealed when providing the yes/no answer to a specific list -- corresponding to revealing a single bit. This argumentation also explains the remaining measures listed in Table \ref{tab: theoretical entropy} in the case with uniformly distributed voting intentions.

If we instead turn to Table \ref{tab: SWE entropy}, where the true voting intentions are not uniformly distributed, it is seen that for the pair method very little information about the true voting intention is retained given an answer. This is perhaps not surprising given the discussion above, but it is interesting to see that the least anonymity contained when the true voting intention is SD is approximately 0.1 bits, when SD is paired with the answer ``O'' corresponding to ``Other parties''. Concerning the list method, we see that the situation is similar, but still more information is retained as a consequence of that {\it all} lists contain 5 out of 10 parties.

\subsection{Jeopardy}

From Table \ref{tab: theoretical jeopardy} it is clear that both $J(r)$ and $\bar J$ are greater than or equal to one for both methods. This is easily seen for the pair method, assuming $1 \in r$, since
\[
J(r) = \frac{1-p_1}{p_j} = \frac{p_j + \sum_{k \not\in \{1,j\}} p_k}{p_j} \ge 1,
\]
and the same argument applies to the other expressions as well. Thus, from the definition of jeopardy this tells us that an answer will increase the odds of the respondent to actually have the sensitive voting preference. Moreover, from Table \ref{tab: theoretical jeopardy} we see that for the situation with uniform voting preferences the pair method will reveal more information about the true voting intention than the list method -- in agreement with the results on entropy.

If we instead consider the situation corresponding to the Swedish general election from 2014, i.e.\ Table \ref{tab: SWE jeopardy}, we see that the same ordering of the methods w.r.t.\ $\bar J$ remains. Again, the sensitive vote is taken to be the Sweden Democrats (SD). One can also note that the pair method performs much worse than the list method in this situation. It is, however, less clear how to value a deviation from 1 as opposed to the entropy measures which have a natural scale in terms of bits.

Recall from Section \ref{sec:jeopardy} that the original introduction of jeopardy from \cite{leysieffer1976respondent} in the yes/no situation was possible to express in terms of ``design'' probabilities -- subjectively {\it chosen} probabilities which defines the randomization procedure being used. From Table \ref{tab: theoretical jeopardy} we see that this clearly is not the case for the pair and the list method, since both $J(r)$ and $\bar J$ depend on the true underlying voting intentions. Given this we believe that the entropy measures introduced above, which also depend on the true voting preferences, are closer to what a respondent is truly interested in. Due to this, we recommend that the entropy measures should be used instead of jeopardy in the present situation. 

\subsection{Variance and bias}

Assuming no bias, $b_i = 0$, then $\tilde p_i \coloneqq X_i/n$ is the standard unbiased ML estimator of $p_i$ assuming a direct response survey in an infinite population, i.e.\ $X_i \sim \mathit{Bin}(n, p_i)$, with $\Var[\tilde p_i] = \frac1n p_i(1 - p_i)$. Note that the variance of $\hat p_i$ for the pair method from \eqref{phat_par_var} above may be re-written according to
\begin{align*}
\Var[\hat p_i] &= \frac{1}{n}p_i(1-p_i) + \frac{1}{n(N-2)}(1-p_i)\\
&= \Var[\tilde p_i]  + \frac{1}{n(N-2)}(1-p_i)\\
&> \Var[\tilde p_i],
\end{align*}
as anticipated, since $\hat p_i$ is based on anonymised information which results in lack of precision compared to $\tilde p_i$.

Although the list method has computable (co)variances, it is in general not possible to obtain closed form expressions for a particular voting intention unless additional simplifying assumptions are made. In Table \ref{tab: theoretical variance bias} this is done for the case with uniform voting intentions. Compared with the results on entropy and jeopardy above, we see that the pair method here will outperform the list method. This is not surprising, since, as seen above, the pair method will disclose more information -- information which contributes to inference. In the situation when the true voting intentions are uniformly distributed with $N = 10$ Table \ref{tab: theoretical variance bias} implies that the variance for the list method is 4 times higher than for the pair method and we see that the ratio between the baseline variance w.r.t.\ $\tilde p_i$ and the pair method is 0.44 whereas the same ratio for the list method is only 0.11. The latter corresponds to that the pair method's variance is approximately 2 times higher than the baseline variance and the corresponding figure for the list method is 9 times higher.

When we consider the Swedish general election from 2014 in more detail we will focus on the Sweden Democrats (SD -- sensitive), the the Social Democrats (S -- largest) and those voting for ``other'' (O) less established parties (smallest). First, note that since the Swedish general election contains $N = 10$ different choices the list method will have constant variance --- which is not the case for the pair method and the benchmark binomial method. In Figures \ref{s_val_sd}, \ref{s_val_s} and \ref{s_val_ovr} we see that for the list method we need to have a sample size of approximately $n = 8~ 000$ in order to have a standard deviation of $1 \%$, i.e.\ a $95 \%$ confidence interval with width of $3.9 \%$, and need a sample size of approximately $n = 15~ 000$ in order to reduce the standard deviation to $0.75 \%$ ($95 \%$ confidence width of $2.9 \%$). Also note that the standard deviation for the list method is the same in all three figures. From the same figures we see that we need approximately $n = 5~ 000 - 12~ 000$ in order for the pair method to have a standard deviation of $0.5 \%$, noting that the standard deviation for the pair method (and the baseline binomial) differs between the figures.

Further, if we turn to calculating power, $\pi_\gamma(b_i)$, following \eqref{eq : power-bias}, we need to decide on the total survey size together with how many we should allocate to the standard direct poll and how many which should be allocated to answer according to the list or the pair method. In all analyses we use a total survey size of $n = 15~ 000$ chosen based on the above variance discussions and use the two alternatives for allocation: (i)  $n_{\mathit{List}} = n_{\mathit{Pair}} = 13~ 500$ and (ii) use allocations which are optimised w.r.t.\ $\pi_\gamma(b_i)$ under the null-hypothesis $b_i = 0$, that is we choose $n_\bullet$ according to
\[
n_\bullet = n \frac{\sqrt{\Var_{\hat p}[\hat p]}}{\sqrt{\Var_{\hat p}[\hat p]} + \sqrt{\Var_{\tilde p}[\tilde p]}},
\]
where $n_\bullet$ corresponds to either the list or the pair method. Moreover, in all calculations we have used the confidence level (type I error) $\gamma = 5 \%$.

By comparing Figure \ref{bias_equal_p_nbin1500} with Figure \ref{bias_equal_p_optim_n}, and Figure \ref{bias_sd_nbin1500} with Figure \ref{bias_sd_optim_n}, we see that there is a substantial gain in power when using the optimised allocation between the types of surveys. At $90 \%$ power the difference in detectable bias in the support for SD is approximately 1 percentage point for the pair method and 0.5 percentage points for the list method. For the optimised allocation (Figure \ref{bias_sd_optim_n}) it is possible to detect bias which is slightly less than 2 percentage points with the pair method and 3 percentage points with the list method, at $90 \%$ power. In the days before the Swedish general election in 2014 most large opinion polls underestimated the SD support by 2 to 3 percentage points, see e.g. \cite{svt2014alla}.

\section{Concluding remarks}

The present paper is concerned with methods on how to openly ask multiple choice questions where at least one alternative is seen as sensitive and where a response does not fully divulge the respondent's true position to the question asked. In order to do so we have presented two methods, the ``pair method’’ and the ``list method’’, which both rely on the idea that a response is defined in terms of a subset of all possible choices which still contains the respondent's true choice. This provides the respondent with plausible deniability if the sensitive answer(s) is contained in the subset of choices which defines its response. The degree of anonymity which is obtained in this way is possible to quantify in terms of both entropy related measures of privacy and in terms of ``jeopardy’’ measures. Using these measures it is possible to communicate the degree of anonymity which is retained when participating in this type of survey.

Further, for both methods we have derived unbiased estimators of the true underlying population proportions which belong to the different categories of choices, together with expressions for the estimators’ variances as well as central limit theorems. Moreover, these results also allows us to make power calculations w.r.t.\ detection of possible bias in responses when compared with standard direct survey methods.

As discussed in the introduction, our main motivation for this research topic is the recently observed problems in estimating voting intentions in general elections. In particular, the problem of correctly estimating the population proportion which will vote for populist parties/candidates. Due to this context, we believe that it is crucial that an anonymised survey method, apart from allowing for an anonymised responses, does not rely on complicated randomisation devices, is based on simple questions and allows for automated survey procedures. The methods introduced in the present paper allows for this. Moreover, by providing anonymity we believe that the introduced methods should be able to reduce response bias. Still, the reduction in the estimators’ precision, due to anonymisation, is possible to compensate for by increasing the sample size, which could be done inexpensively using automation. It is also worth to stress that both methods are based on random sampling — the use of non-randomised sampling techniques is one of the explanations for poor performance given in the post-election analyses given in \cite{sturgis2017assessment,sturgis2016report}.

\section*{Acknowledgement}

The first author is grateful for the support by AFA Insurance. Opinions expressed in this paper are not necessarily those of AFA Insurance.

\bibliographystyle{plain}
\bibliography{referenser.bib}

\appendix

\section{Proofs}\label{sec:proofs}

Let $\bm{1}$ be a column vector of ones whose length depends on the context, $J_{M, N}$ be an $M\times N$ matrix of ones, and let $M_n(a,b)$ be the $n\times n$ matrix with all diagonal elements equal to $a$ and all off-diagonal elements equal to $b$: $M_n(a,b) := (a-b)I + bJ_{n,n}$. We will sometimes write $A^+$ for $(A'A)^{-1}A$ since the latter is the Moore-Penrose inverse of $A$ when $A$ has full column rank.

\begin{proof}[Proof of Lemma \ref{lemmaUB}]
We first note that $A$ has dimensions $(\sum_i K_i) \times N$ so full (column) rank means having rank $N$ and in particular $(A'A)^{-1}$ is well-defined. Since  the $\bmX_i$'s have expected value $\Ex[\bmX_i] = n_i\bmu_i = n\alpha_i A_i\bmp$ and are independent with covariance matrices $nV(A_i\bmp) = n\alpha_iV(A_i\bmp)$ we have
\begin{align*}
\Ex[\bbmX] &= nA\bmp \\
\Var[\bbmX] &= n\begin{pmatrix} \alpha_1 V(A_1\bmp) && 0 \\ & \ddots & \\ 0 && \alpha_L V(A_L\bmp) \end{pmatrix}
\end{align*}
so that
\begin{align*}
\Ex[\hat{\bmp}] &= \frac{1}{n}(A'A)^{-1}A'\Ex[\bbmX] = (A'A)^{-1}A'A\bmp = \bmp, \\
\Var[\hat{\bmp}] &= \frac{1}{n^2}A^+\Var[\bbmX] A^{+\prime} \\
&= \frac{1}{n}(A'A)^{-1}A'\begin{pmatrix} \alpha_1 V(A_1\bmp) && 0 \\ & \ddots & \\ 0 && \alpha_L V(A_L\bmp) \end{pmatrix}A(A'A)^{-1} \\
&= \frac{1}{n}(A'A)^{-1}\bigg(\sum_{i=1}^L \alpha_i^3 A_i'V(A_i\bmp)A_i\bigg)(A'A)^{-1}.
\end{align*}
In the case $L=1$, the variance of the estimator is
\begin{align}
\Var[\hat\bmp] &= \frac1n (A'A)^{-1}A'V(A\bmp)A(A'A)^{-1} = \frac1n\big((A'A)^{-1}A'\diag(A\bmp)A(A'A)^{-1} - \bmp\bmp'\big). \label{L1_var}
\end{align}
\end{proof}

\begin{proof}[Proof of Lemma \ref{lemma: CGS}]
We want to prove a central limit theorem for
\[
\hat{\bm{p}} = \frac1n (A'A)^{-1}A'\bbmX,
\]
and will start by proving a central limit theorem for $\bbmX$. Recall that $\bmX_i \sim \Mult(n_i,\bmu_i)$ which gives us that
\[
\hat\bmu_i := \frac{1}{n_i}\bmX_i
\]
is an unbiased estimator of $\bmu_i = A_i\bmp$ with covariance $V(A_i\bmp)/n_i$. From e.g.\ \cite[Thm.\ 14.3-4]{bishop2008discrete} it then follows that
\[
\sqrt{n_i}(\frac{1}{n_i}\bmX_i - A_i\bmp) = \sqrt{n_i}(\hat\bmu_i - \bmu_i) \stackrel{D}{\to} \mathbf{U}_i,\quad \text{as } n_i\to\infty,
\]
where $\mathbf{U}_i$ is multivariate Gaussian with mean $\mathbf{0}$ and covariance $V(A_i\bmp)$. Thus, if
\[
\alpha_i = \frac{n_{i}}{n}\to \tilde\alpha_i\ge 0, \quad \text{as } n_i,n\to\infty,
\]
for all $i$ such that $\sum_i \tilde\alpha_i = 1$ it follows, due to Slutsky's theorem, that
\[
\sqrt{n}(\frac1n\bmX_i - \alpha_iA_i\bmp) = \sqrt{\alpha_i}\sqrt{n_i}(\hat\bmu_i - \bmu_i) \stackrel{D}{\to} \tilde{\mathbf{U}}_i, \quad \text{as } n_i,n\to\infty,
\]
where $\tilde{\mathbf{U}}_i$ is multivariate Gaussian with mean $\mathbf{0}$ and covariance $\tilde{\alpha}_iV(A_i\bmp)$. Hence, if we stack all $\bmX_i$ in $\bbmX$ as before, it follows, due to independence between the $\bmX_i$s, that
\[
\sqrt{n}(\frac1n\bbmX_i - A\bmp) \stackrel{D}{\to} \tilde{\mathbf{U}}, \quad \text{as } n_i,n\to\infty, \text{for all } i,
\]
where $\tilde{\mathbf{U}}_i$ is multivariate Gaussian with mean $\mathbf{0}$ and covariance
\[
\Var[\tilde{\bm{U}}] = \begin{pmatrix} \tilde{\alpha}_1 V(A_1\bmp) && 0 \\ & \ddots & \\ 0 && \tilde{\alpha}_L V(A_L\bmp), \end{pmatrix}
\]
by the same arguments as those used in the proof of Lemma \ref{lemmaUB}. Thus, we have obtained a central limit theorem for $\bbmX$.

Continuing, note that we may re-write
\[
\hat{\bm{p}} = \frac1n (A'A)^{-1}A'\bbmX = h_n(\frac1n\bbmX)
\]
where
\[
h_n(x) := (A'A)^{-1}A'x \in \mathbb{R}^{N\times 1}
\]
for $x \in \mathbb{R}^{(\sum_i K_i)\times 1}$. Moreover, given that
\[
\alpha_i = \frac{n_{i}}{n}\to \tilde\alpha_i\ge 0, \quad \text{as } n_i,n\to\infty,
\]
for all $i$ such that $\sum_i \tilde\alpha_i = 1$ it is clear that
\[
A \to \tilde{A}, \quad \text{as } n_i,n\to\infty, \quad \text{for all } i,
\]
and that
\[
(A'A)^{-1}A' \to (\tilde{A}'\tilde{A})^{-1}\tilde{A}'.
\]
Consequently, it follows that
\[
h_n(x) \to h(x), \quad \text{as } n_i,n\to\infty, \quad \text{for all } i,
\]
which together with \cite[Thm.\ 3.27]{kallenberg1997foundations} yields
\[
\sqrt{n}(\hat{\bm{p}}-\bm{p}) = h_n(\sqrt{n}(\frac1n\bbmX - A\bmp)) \stackrel{D}{\to} h(\tilde{U}) = (\tilde{A}'\tilde{A})^{-1}\tilde{A}'\tilde{U},
\]
where $\bm{G}:= (\tilde{A}'\tilde{A})^{-1}\tilde{A}'\tilde{U}$ is multivariate Gaussian with mean $\bm{0}$ and covariance
\[
\Var[\bm{G}] = (\tilde{A}'\tilde{A})^{-1}\bigg(\sum_{i=1}^L \tilde{\alpha}_i^3 A_i'V(A_i\bmp)A_i\bigg)(\tilde{A}'\tilde{A})^{-1},
\]
which concludes the proof.
\end{proof}

\begin{proof}[Proof of Theorem \ref{thm_pair}]
As noted above, we can write $\bmu = A\bmp$, where $A=\frac{1}{N-1}B'$ and $B$ is the $N\times M$ matrix with elements $b_{ik}$ as defined above:
$$
A \coloneqq \frac{1}{N-1}\begin{pmatrix} 1 & 1 & 0 & \dots & 0 & 0 \\
                                         1 & 0 & 1 & \dots & 0 & 0\\
                                         \vdots & \vdots & \vdots & & \vdots & \vdots \\
                                         0 & 0 & 0 & \dots & 1 & 1 \end{pmatrix}
$$

Since $A$ has rank $N$, we can use Lemma \ref{lemmaUB} with $L=1$ to derive an unbiased estimator of $\bmp$, and we will in fact show that the $\hat\bmp$ of the Theorem equals that of the Lemma, i.e.\ that $\hat\bmp$ of the Theorem equals $A^+\hat\bmu$. We claim
\begin{equation}\label{aplus}
A^+ = \frac{N-1}{N-2}B - \frac{1}{N-2}J_{N,M}.
\end{equation}
To prove \eqref{aplus}, it is sufficient to show that $A^+A = I$. First note that $J_{K,M}A = J_{K,N}$ since $A\in\mathcal{A}(M,N)$ and thus $\bm{1}'A = \bm{1}'$. Also note that $BB'$ is a matrix with all diagonal elements equal to $N-1$ and all off-diagonal elements equal to 1, i.e.\ 
\begin{equation}\label{BB}
BB'=(N-2)I+J_{N,N}.
\end{equation}
Therefore,
\begin{align*}
A^+A &= \left(\frac{N-1}{N-2}B - \frac{1}{N-2}J_{N,M}\right)A \\
&= \frac{N-1}{N-2}BA - \frac{1}{N-2}J_{N,N} \\
&= \frac{1}{N-2}BB' - \frac{1}{N-2}J_{N,N} \\
&= I,
\end{align*}
and we have thus shown that \eqref{aplus} holds. Since $\hat\bmu$ is a vector of probabilities, $J_{N,M}\hat\bmu = \bm{1}$, and
$$
A^+\hat\bmu = \left(\frac{N-1}{N-2}B - \frac{1}{N-2}J_{N,M}\right)\hat\bmu = \frac{N-1}{N-2}B\hat\bmu - \frac{1}{N-2}\bm{1} = \hat\bmp,
$$
where the right hand side is defined in \eqref{phat_par}. By Lemma \ref{lemmaUB}, $\hat\bmp$ is an unbiased estimator of $\bmp$.

It remains to show that \eqref{L1_var} specializes to \eqref{phat_par_var} and \eqref{phat_par_cov} in our case. We see that it suffices to show that $A^+\diag(A\bmp)A^{+\prime}$ has diagonal elements $(1+(N-3)p_i)/(N-2)$ for $i=1,\dots,N$ and off-diagonal elements $-(1-p_i-p_j)/(N-2)^2$.

In general, when a matrix $E$ with elements $e_{ij}$, $i,j=1,\dots,N$, is defined as the product $E \coloneqq C\diag(\bm{d})C'$ where $C$ has elements $c_{ij}$ for $i=1,\dots,N$, $j=1,\dots,M$ and $\bm{d} = (d_1,\dots,d_M)'$, then
\begin{equation}\label{c_diag_c}
e_{ij} = \sum_k d_k c_{ik}c_{jk}. 
\end{equation}
We use equation \eqref{c_diag_c} with $\bm{d} = A\bmp = \frac{1}{N-1}(p_1+p_2,\dots,p_{N-1}+p_N)' = \frac{1}{N-1}B'\bmp$ and
$$
C = A^+ = \frac{1}{N-2}\big((N-1)B-J_{N,M}\big).
$$
Equation \eqref{aplus} means that $A^+$ has elements $(A^+)_{ik} = \frac{1}{N-2}\big((N-1)b_{ik}-1\big) = 1$. Element $k$ of $A\bmp = \frac{1}{N-1}B'\bmp$ is $\frac{1}{N-1}\sum_l p_lb_{lk}$.

For the diagonal elements we get
\begin{align*}
\big(A^+\diag(A\bmp)A^{+\prime}\big)_{ii} &= \sum_k d_k c_{ik}^2 \\
&= \frac{1}{(N-1)(N-2)^2}\sum_{k,l}p_l b_{lk}\big((N-1)b_{ik}-1\big)^2 \\
\{b_{ik}^2 = b_{ik}\}\quad &= \frac{1}{(N-1)(N-2)^2}\sum_{k,l}p_lb_{lk}\big((N-1)(N-3)b_{ik} + 1\big) \\
&= \frac{N-3}{(N-2)^2}\sum_{k,l}p_lb_{lk}b_{ik} + \frac{1}{(N-1)(N-2)^2}\sum_{k,l}p_lb_{lk} \\
\{\text{by \eqref{BB} and def.\ of $B$}\}\quad &= \frac{N-3}{(N-2)^2}\Big((N-2)p_i+\sum_l p_l\Big) + \frac{1}{(N-2)^2}\sum_{l}p_l \\
&= \frac{1+(N-3)p_i}{N-2},
\end{align*}
which is what we want. We proceed with the off-diagonal elements with $i\neq j$.
\begin{align*}
\big(A^+\diag(A\bmp)A^{+\prime}\big)_{ij} &= \sum_k d_k c_{ik}c_{jk} \\
&= \frac{1}{(N-1)(N-2)^2}\sum_{k,l}p_l b_{lk}\big((N-1)b_{ik}-1\big)\big((N-1)b_{jk}-1\big) \\
&= \frac{N-1}{(N-2)^2}\sum_{k,l}p_l b_{lk}b_{ik}b_{jk} -  \frac{1}{(N-2)^2}\sum_{k,l}p_l b_{lk}b_{ik} \\
&\qquad -\frac{1}{(N-2)^2}\sum_{k,l}p_l b_{lk}b_{jk} + \frac{1}{(N-1)(N-2)^2}\sum_{k,l}p_l b_{lk} \\
&= \frac{N-1}{(N-2)^2}(p_i+p_j) -  \frac{1}{(N-2)^2}\big((N-2)p_i+1\big) \\
&\qquad -\frac{1}{(N-2)^2}\big((N-2)p_j+1\big) + \frac{1}{(N-2)^2} \\
&= -\frac{1-p_i-p_j}{(N-2)^2},
\end{align*}
and the proof is done.
\end{proof}

\begin{proof}[Proof of Proposition \ref{prop:lika_varians}]
We will apply Lemma \ref{lemmaUB} and make the expression in \eqref{phat_var2} more explicit, at least with regards to the diagonal of the covariance matrix. Note that 
\begin{equation*}
M_n(a,b)M_n(c,d) = M_n\big(ac + bd(n-1), ad + bc + bd(n-2)\big)
\end{equation*}
and thus
\begin{equation}\label{Minv}
M_n(a,b)^{-1} = M_n\left(\frac{a + b(n-2)}{(a-b)(a + b(n-1))}, \frac{-b}{(a-b)(a + b(n-1))}\right)
\end{equation}
provided $a\neq b$ and $a \neq -b(n-1)$. For the special case $J_{n,n} = M_n(1,1)$ we have
\begin{equation}\label{Jmult}
M_n(a,b)J_{n,n} = M_n(a+b(n-1),a+b(n-1)) = (a+b(n-1))J_{n,n}.
\end{equation}

With $A$ the matrix defined in Lemma \ref{lemmaUB}:
$$
A \coloneqq \alpha\begin{pmatrix} A_1 \\ \vdots \\ A_L \end{pmatrix} \eqqcolon \alpha \bar A.
$$
The $2L \times N$ matrix $\bar A$ codes with ones for membership of a party in each of the $2L$ lists and complementary lists, and has zeros for non-membership. 

The matrix $A'A = \alpha^2 \bar{A}'\bar{A} \eqqcolon \alpha^2 B$ where the elements of $B=(b_{ij})$ count the number of lists and complementary lists that include both party $i$ and $j$. Since the lists and complementary lists include all combinations of $N/2$ out of $N$ parties and each party is included in half of the combinations, $b_{ii} = \binom{N}{N/2}/2 = L$ and $b_{ij} = \binom{N-2}{N/2-2} = L\frac{N/2-1}{N-1}$ for $i\neq j$. In other words, we have
$$
A'A = \alpha^2 M_N\bigg(L,L\frac{N/2-1}{N-1}\bigg) = \alpha M_N\bigg(1,\frac{N/2-1}{N-1}\bigg)
$$
with inverse, by \eqref{Minv},
$$
(A'A)^{-1} = 2L M_N\bigg(1-\frac{2(N-1)}{N^2}, -\frac{N-2}{N^2}\bigg).
$$
Observe that, by equation \eqref{Jmult} and the matrices being symmetric,
\begin{equation}
(A'A)^{-1}J_{N,N} = J_{N,N}(A'A)^{-1} = \frac{1}{N}J_{N,N}.
\end{equation}
Since $A_i\bmp = (p^+_i,p^-_i)'$, where $ p_i^\circ \coloneqq \sum_{k\in\mathfrak{L}_i^\circ}p_k$ for $\circ = +,-$, we have
\begin{align*}
V(A_i\bmp) &= p^+_ip^-_i\begin{pmatrix} 1 & -1 \\ -1 & 1\end{pmatrix} \eqqcolon q_i \begin{pmatrix} 1 & -1 \\ -1 & 1\end{pmatrix} = q_i (2I - J_{2,2}),
\end{align*}
where we defined $q_i \coloneqq p_i^+p_i^-$. Now
$$
A_i'V(A_i\bmp)A_i=q_i(2A_i'IA_i - A_i'J_{2,2}A_i) = q_i(2A_i'A_i - J_{N,N}) = q_i C_i,
$$
where the matrix $C_i$ has equal number of 1's and $-1$'s on each row and column. For such a matrix, $C_iJ_{N,N} = J_{N,N}C_i = \mathbf{0}$, where $\mathbf{0} \coloneqq 0J_{N,N}$ is the zero matrix, and thus
\begin{align*}
M_N(a,b)C_iM_N(a,b) &= M_n(a,b)C_i\big((a-b)I + bJ_{N,N}\big) = (a-b)M_N(a,b)C \\
&= (a-b)\big((a-b)I+J_{N,N}\big)C_i = (a-b)^2C_i.
\end{align*}
We can write equation \eqref{phat_var2}
\begin{align*}
\Var[\hat{\bmp}] &= \frac{1}{n}(A'A)^{-1}\bigg(\sum_{i=1}^L \alpha_i^3 A_i'V(A_i\bmp)A_i\bigg)(A'A)^{-1} \\
&= \frac{4}{nL}\sum_{i=1}^Lq_iM_N\bigg(1-\frac{2(N-1)}{N^2}, -\frac{N-2}{N^2}\bigg)C_iM_N\bigg(1-\frac{2(N-1)}{N^2}, -\frac{N-2}{N^2}\bigg) \\
&= \frac{4}{nL}\bigg(1-\frac{1}{N}\bigg)^2\sum_{i=1}^Lq_iC_i
\end{align*}
Since the diagonal of each $C_i$ equals $\bm{1}$ all diagonal elements of the covariance matrix $\Var[\hat{\bmp}]$ equal $\frac{4}{nL}(1-\frac{1}{N})^2\sum_{i=1}^Lq_i$.
\end{proof}

\begin{proof}[Derivation of the variance and covariance entries in table \ref{tab: theoretical variance bias}]

Since $A_i\bmp = (\frac12,\frac12)'$ for all $i$, we have
$$
V(A_i\bmp)=\frac14 \begin{pmatrix*}[r] 1 & -1 \\ -1 & 1 \end{pmatrix*}=\frac14(2I-J_{2,2})
$$
and
$$
A_i'V(A_i\bmp)A_i=\frac14(2A_i'IA_i - A_i'J_{2,2}A_i) = \frac14(2A_i'A_i - J_{N,N}),
$$
and the sum that appears in equation \eqref{phat_var2} is
\begin{align*}
\sum_{i=1}^L\alpha_i^3A_i'V(A_i\bmp)A_i &= \sum_{i=1}^L\frac{\alpha^3}{4}(2A_i'A_i - J_{N,N}) \\
&= \frac{\alpha^3}{2}\sum_{i=1}^LA_i'A_i - \frac{\alpha^3 L}{4}J_{N,N}   \\
&= \frac{1}{2L}A'A - \frac{1}{4L^2}J_{N,N}.
\end{align*}
Putting it all together we get
\begin{align*}
\frac{1}{n}(A'A)^{-1}&\bigg(\sum_{i=1}^L \alpha_i^3 A_i'V(A_i\bmp)A_i\bigg)(A'A)^{-1} = \frac{1}{n}(A'A)^{-1}\bigg(\frac{1}{2L}A'A - \frac{1}{4L^2}J_{N,N}\bigg)(A'A)^{-1} \\
&= \frac{1}{n}\bigg(\frac{1}{2L}(A'A)^{-1} - \frac{1}{4L^2}(A'A)^{-1}J_{N,N}(A'A)^{-1}\bigg) \\
&= \frac{1}{n}\bigg(M_N\bigg(1-\frac{2(N-1)}{N^2}, -\frac{N-2}{N^2}\bigg) - \frac{1}{N^2}J_{N,N}\bigg) \\
&= \frac{1}{n}M_N\bigg(1-\frac{2N-1}{N^2}, -\frac{N-1}{N^2}\bigg) \\
&= \frac{1}{n}M_N\bigg(\Big(1-\frac1N\Big)^2, -\frac1N\Big(1-\frac1N\Big)\bigg).
\end{align*}
\end{proof}

\section{Entropy calculations: Pair and list method}

As stated in Section \ref{sec:entropy} the overall entropy of the distribution of voting intentions is
$$
H[T] = -\Ex_T[\log_2 p_T(T)] = -\sum_{i=1}^N p_i \log_2 p_i.
$$
For the pair method it holds that
$$
p_{T,R}(i, \{i,j\}) = \Prob(T = i, R=\{i,j\})=\frac{p_i}{N-1}
$$
and
$$
p_{T\mid R}(i\mid \{i,j\}) = \Prob(T=i\mid R=\{i,j\}) = \frac{p_i}{p_i+p_j}
$$
for all $i\neq j$. Further, by using the above it follows that
\begin{equation}\label{bet_ent_par}
H[T\mid R] = -\Ex_{T,R}[\log_2 p_{T|R}(T\mid R)] = -\sum_{i\neq j}\frac{p_i}{N-1}\log_2 \frac{p_i}{p_i+p_j}
\end{equation}
and
\begin{equation}\label{info_par}
I[T;R] = H[T] - H[T\mid R] = -\sum_{i\neq j} \frac{p_i}{N-1}\log_2 (p_i + p_j).
\end{equation}

For the list method, let $\mathfrak{L}_l^\circ$ denote either $\mathfrak{L}_l^+$ or $\mathfrak{L}_l^-$, and let $ p_l^\circ \coloneqq \sum_{k\in\mathfrak{L}_l^\circ}p_k$.
\begin{equation*}
p_{T,R}(i, \mathfrak{L}_l^\circ) = p_i\alpha_l
\end{equation*}
if $i\in\mathfrak{L}_l^\circ$, and
$$
p_{T|R}(i\mid \mathfrak{L}_l^\circ) = \frac{p_i}{ p_l^\circ}.
$$
Thus
\begin{align}
H[T\mid R] &= -\Ex_{T,R}[\log_2 p_{T\mid R}(T\mid R)] = -\sum_{l=1}^L\sum_{\substack{ i\in\mathfrak{L}_l^\circ \\ \circ = +,-}}
p_i\alpha_l\log_2 \frac{p_i}{ p_l^\circ} \notag\\
&= H[T] +  \sum_{l=1}^L\alpha_l( p_l^+\log_2 p_l^+ +  p_l^- \log_2 p_l^-)   \label{bet_ent_list}
\end{align}
and
\begin{equation}\label{info_list}
I[T;R] = H[T] - H[T\mid R] = -\sum_{l=1}^L\alpha_l( p_l^+\log_2 p_l^+ +  p_l^- \log_2 p_l^-).
\end{equation}
Note that for both the pair method and the list method
$$
p_{T|R}(i|r) = \frac{p_i}{p_i + \sum_{j\in r\setminus\{i\}}p_j}
$$
For a given $i$ this conditional probability is maximized when the other parties indicated in the response $r$ have the least possible support, e.g.\ when $i$ by chance is paired with the smallest other party in the pair method.

\section{Jeopardy calculations: Pair and list method}

For the pair method $\Prob(R=\{1,j\}\mid T=1) = \frac{1}{N-1}$ and
$$
\Prob(R=\{1,j\}\mid T\neq 1) = \frac{\Prob(R=\{1,j\},T\neq 1)}{\Prob(T\neq 1)} = \frac{\frac{1}{N-1}p_j}{1-p_1},
$$
so
$$
J(\{1,j\}) = \frac{1-p_1}{p_j},
$$
and $J(r) = 0$ when $1\notin r$. Averaging over the $\frac{N(N-1)}{2}$ possible responses yields
$$
\bar J = \frac{2(1-p_1)}{N(N-1)}\sum_{j\neq 1}\frac{1}{p_j}.
$$

The list method is similar with $\Prob(R=\mathfrak{L}^\circ_l\mid T = 1) = \alpha_l$ when $1\in\mathfrak{L}^\circ_l$ and
\begin{align*}
\Prob(R=\mathfrak{L}^\circ_l\mid T \neq 1) &= \frac{\Prob(R=\mathfrak{L}^\circ_l, T \neq 1)}{\Prob(T\neq 1)} = \frac{\alpha_l\Prob(T\in\mathfrak{L}^\circ_l\setminus\{1\})}{1-p_1} \\
&= \frac{\alpha_l}{1-p_1}\sum_{j\in\mathfrak{L}^\circ_l\setminus\{1\}}p_j = \alpha_l\frac{p^\circ_l-p_1}{1-p_1}.
\end{align*}
The jeopardy is thus
$$
J(\mathfrak{L}^\circ_l) = \frac{1-p_1}{p^\circ_l-p_1}\ett\{1\in\mathfrak{L}^\circ_l\}
$$
and averaging over all $2L$ lists and complementary lists we get
$$
\bar J = \frac{1-p_1}{2L}\sum_{\mathfrak{L}^\circ_l:1\in\mathfrak{L}^\circ_l}\frac{1}{p^\circ_l-p_1}.
$$

\newpage

\section{Tables}

In all tables ($\ast$) indicates that the the list method uses all lists with exactly half of the number of parties each, and each list has equal weight $\alpha_l$. To prevent double counting, party 1 is always on a list $\mathfrak{L}^+$ and never on a complementary list $\mathfrak{L}^-$.

\begin{table}[h]
\begin{tabular}{c|r|c|c|c|c|c|c|c|c|c}
\rule[-2mm]{0mm}{6mm} & \multicolumn{10}{|c}{Swedish election 2014}\\
\hline
\hline
\rule[-2mm]{0mm}{6mm}  & \textit{SD} & \textit{S} & \textit{M} & \textit{MP} & \textit{C} & \textit{V} & \textit{FP} & \textit{KD} & \textit{FI} & \textit{O}\\
\hline
\rule[-2mm]{0mm}{6mm} $p_i$ & 0.129 &  0.310 & 0.233 & 0.061 & 0.069 & 0.057 & 0.054 & 0.046 & 0.031 & 0.010 \\
\hline
\end{tabular}
\vspace{2mm}

\caption{Outcome of the Swedish general election from 2014.}\label{tab: SWE election}
\end{table}

\begin{table}[h]
\begin{tabular}{cr|c|c}
& \rule[-2mm]{0mm}{6mm} & General $\bmp$ & $p_i = \frac1N$ ($\ast$)\\
\hline
\hline
\rule[-2mm]{0mm}{6mm}
& $H[T]$ & $-\sum_ip_i\log_2p_i$ & $\log_2 N$\\
\hline
\parbox[t]{2mm}{\multirow{3}{*}{\rotatebox[origin=c]{90}{Pair}}} \rule[-2mm]{0mm}{6mm} & $I[T;R]$ & $-\sum_{i\neq j}\frac{p_i}{N-1}\log_2(p_i+p_j)$ & $\log_2 N -1$ \\ 
\rule[-2mm]{0mm}{6mm}  & $H[T|R]$ & $-\sum_{i\neq j}\frac{p_i}{N-1}\log_2\frac{p_i}{p_i+p_j}$ & 1\\
\rule[-2mm]{0mm}{6mm} & $-\max_r\log_2p_{T|R}(1|r)$ & $-\log_2p_1 + \log_2(p_1+\min_{j\neq 1}p_j)$ & 1\\
\hline
\parbox[t]{2mm}{\multirow{3}{*}{\rotatebox[origin=c]{90}{List}}} \rule[-2mm]{0mm}{6mm} & $I[T;R]$ & $-\sum_{l=1}^L\alpha_l( p_l^+\log_2 p_l^+ +  p_l^- \log_2 p_l^-)$ & 1 \\ 
\rule[-2mm]{0mm}{6mm}  & $H[T|R]$  &  $-\sum_ip_i\log_2p_i+  \sum_{l=1}^L\alpha_l( p_l^+\log_2 p_l^+ +  p_l^- \log_2 p_l^-)$ & $\log_2 N -1$ \\
\rule[-2mm]{0mm}{6mm} & $-\max_r\log_2p_{T|R}(1|r)$ & $-\log_2p_1 + \log_2\big(p_1+\min_{l,\circ;1\in\mathfrak{L}^\circ_l}\sum_{j\in\mathfrak{L}^\circ_l\setminus \{1\}}p_j\big)$ &  $\log_2 N -1$\\
\hline
\end{tabular}
\vspace{2mm}

\caption{Entropy related measures. $I[T; R]$ is the expected amount of divulged information, $H[T|R]$ is the expected amount of retained privacy, and $-\max_r\log_2 p_{T|R}(1|r)$ is the least possible amount of retained privacy for a respondent with embarrassing opinion (1).}\label{tab: theoretical entropy}
\end{table}

\begin{table}[h]
\begin{tabular}{cr|c|c}
\rule[-2mm]{0mm}{6mm} & & Swedish election 2014 ($\ast$)& $p_i = \frac{1}{10}$ ($\ast$)\\
\hline
\hline
\rule[-2mm]{0mm}{6mm} & $H[T]$ & 2.80 & 3.32\\
\hline
\parbox[t]{2mm}{\multirow{3}{*}{\rotatebox[origin=c]{90}{Pair}}} \rule[-2mm]{0mm}{6mm} & $I[T;R]$ & 2.06 & 2.32\\
 \rule[-2mm]{0mm}{6mm} & $H[T|R]$ & 0.74 & 1.00\\
 \rule[-2mm]{0mm}{6mm} & $-\max_r\log_2p_{T|R}(1|r)$ & 0.11 & 1.00\\ 
\hline
\parbox[t]{2mm}{\multirow{3}{*}{\rotatebox[origin=c]{90}{List}}}  \rule[-2mm]{0mm}{6mm} & $I[T;R]$ & 0.93 & 1.00 \\ 
 \rule[-2mm]{0mm}{6mm} & $H[T|R]$  &  1.87 & 2.32 \\
\rule[-2mm]{0mm}{6mm} & $-\max_r\log_2p_{T|R}(1|r)$ & 1.07 & 2.32\\ 
\hline
\end{tabular}
\vspace{2mm}

\caption{Entropy related measures --- the general Swedish election 2014.}\label{tab: SWE entropy}
\end{table}

\begin{table}[h]
\begin{tabular}{cr|c|c}
\rule[-2mm]{0mm}{6mm} & & General $\bmp$ & $p_i = \frac1N$ ($\ast$)\\
\hline
\hline
\parbox[t]{2mm}{\multirow{2}{*}{\rotatebox[origin=c]{90}{Pair}}} \rule[-2mm]{0mm}{6mm} & $J(r)$ & $\frac{1-p_1}{p_j}\ett\{1\in r\}$ & $(N-1)\ett\{1\in r\}$ \\ 
 \rule[-2mm]{0mm}{6mm} & $\bar J$ & $\frac{2(1-p_1)}{N(N-1)}\sum_{j\neq 1}\frac{1}{p_j}$ & $2\big(1-\frac1N\big)$\\
\hline
\parbox[t]{2mm}{\multirow{2}{*}{\rotatebox[origin=c]{90}{List}}} \rule[-2mm]{0mm}{6mm} & $J(\mathfrak{L}^\circ_l)$ & $\frac{1-p_1}{p^\circ_l-p_1}\ett\{1\in \mathfrak{L}^\circ_l\}$ & $\frac{2(N-1)}{N-2}\ett\{1\in \mathfrak{L}^\circ_l\}$ \\ 
 \rule[-2mm]{0mm}{6mm} & $\bar J$  &  $\frac{1-p_1}{2L}\sum_{\mathfrak{L}^\circ_l:1\in\mathfrak{L}^\circ_l}\frac{1}{p^\circ_l-p_1}$ & $\frac{N-1}{N-2}$ \\
\hline
\end{tabular}
\vspace{2mm}

\caption{Jeopardy measures.}\label{tab: theoretical jeopardy}
\end{table}

\begin{table}[h]
\begin{tabular}{cr|c|c}
\rule[-2mm]{0mm}{6mm} & & Swedish election 2014 ($\ast$)& $p_i = \frac{1}{10}$ ($\ast$)\\
\hline
\hline
\parbox[t]{2mm}{\multirow{2}{*}{\rotatebox[origin=c]{90}{Pair}}}     & \rule[-2mm]{0mm}{6mm}  $\max J(r)$ & 87.1 & 9.00\\
 & \rule[-2mm]{0mm}{6mm}  $\bar J$ & 4.42 & 1.80\\
\hline
\parbox[t]{2mm}{\multirow{2}{*}{\rotatebox[origin=c]{90}{List}}}      & \rule[-2mm]{0mm}{6mm}  $\max J(\mathfrak{L}^\circ_l)$  & 6.18  & 2.25 \\
& \rule[-2mm]{0mm}{6mm}  $\bar J$  &  1.37 & 1.13 \\
\hline
\end{tabular}
\vspace{2mm}

\caption{Jeopardy measures --- the general Swedish election 2014.}\label{tab: SWE jeopardy}
\end{table}

\begin{table}[h]
\begin{tabular}{l|c|c}
\rule[-2mm]{0mm}{6mm} $p_i = \frac{1}{N}$ ($\ast$) & $\Var(\hat p_i)$ & $\Cov(\hat p_i, \hat p_j)$ \\
\hline
\hline 
Pair & \rule[-2mm]{0mm}{6mm} $\frac{2}{n(N-2)}(1-\frac1N)^2$ & $-\frac{2}{nN(N-2)}(1-\frac1N)$\\
\hline
List &  \rule[-2mm]{0mm}{6mm} $\frac{1}{n}(1-\frac1N)^2$ & $-\frac{1}{nN}(1-\frac1N)$\\
\hline
Baseline & \rule[-2mm]{0mm}{6mm} $\frac{1}{nN}(1-\frac1N)$ & $-\frac{1}{nN^2}$\\
\hline
\end{tabular}
\vspace{2mm}

\caption{Variance and bias relations. The values for the pair method are from Theorem \ref{thm_pair} and the list expressions are deduced in Appendix \ref{sec:proofs}.}\label{tab: theoretical variance bias}
\end{table}

\clearpage

\section{Figures}

\begin{figure}[h]
\center
\includegraphics[width=11cm]{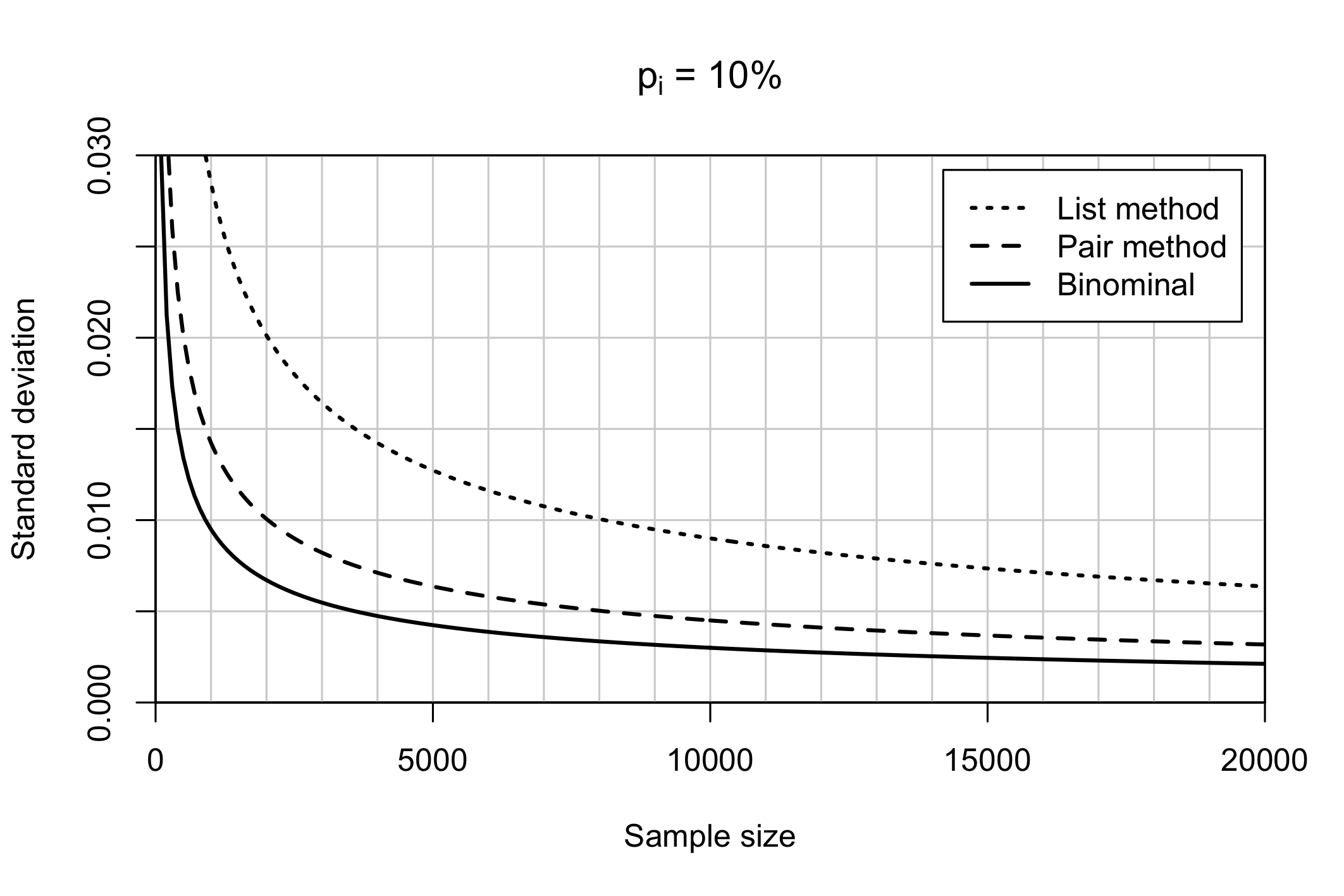}
\caption{Standard deviation for the estimate of a party's support as a function of sample size in the case of equal true supports $p_i = 10\%$ for $i=1,\dots,10$.}
\label{s_lika}
\end{figure}

\begin{figure}[h]
\center
\includegraphics[width=11cm]{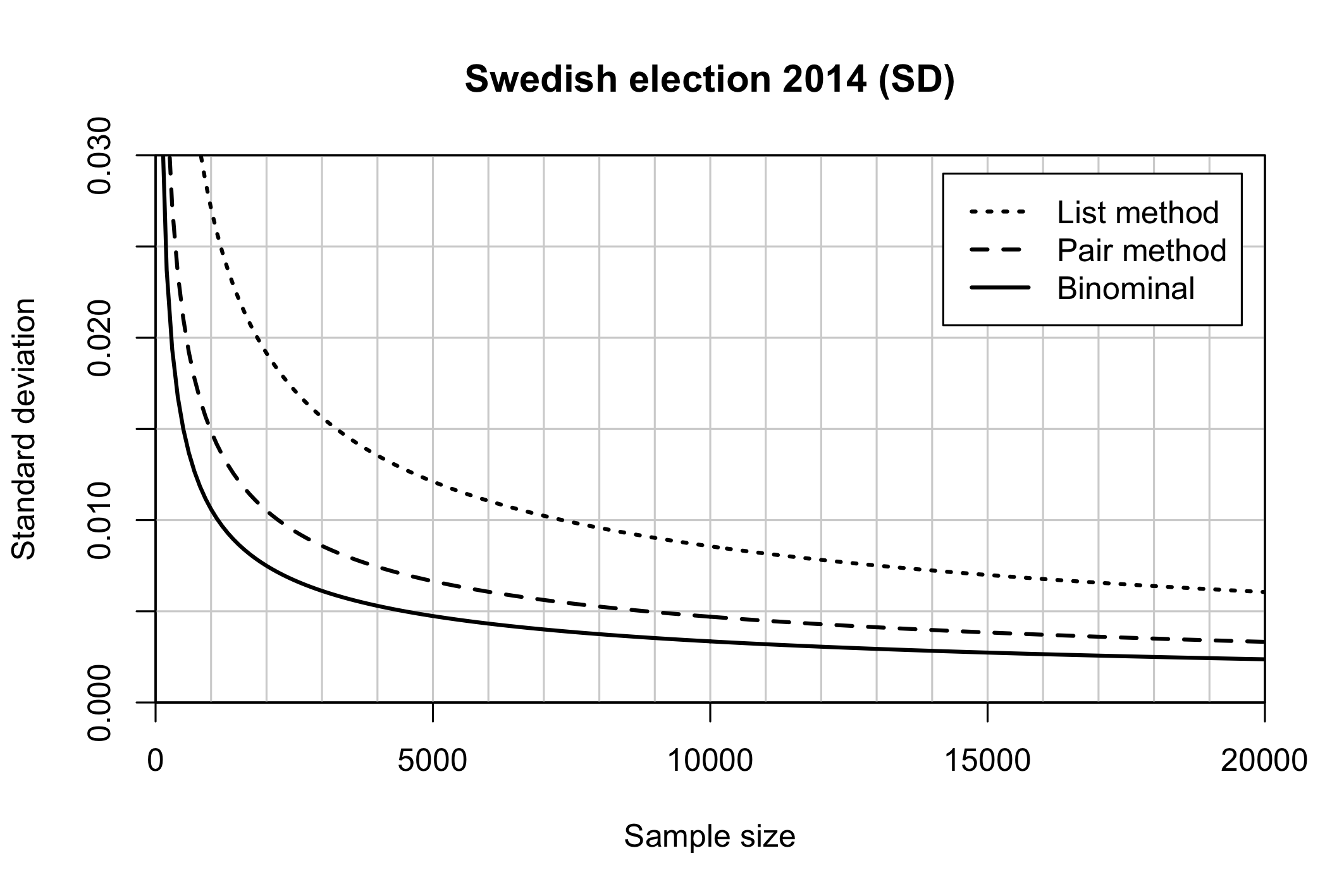}
\caption{Standard deviation for the estimate of Sweden Democrats' support as a function of sample size in the case of true support equal to the 2014 election (12.9\%).}
\label{s_val_sd}
\end{figure}

\begin{figure}[h]
\center
\includegraphics[width=11cm]{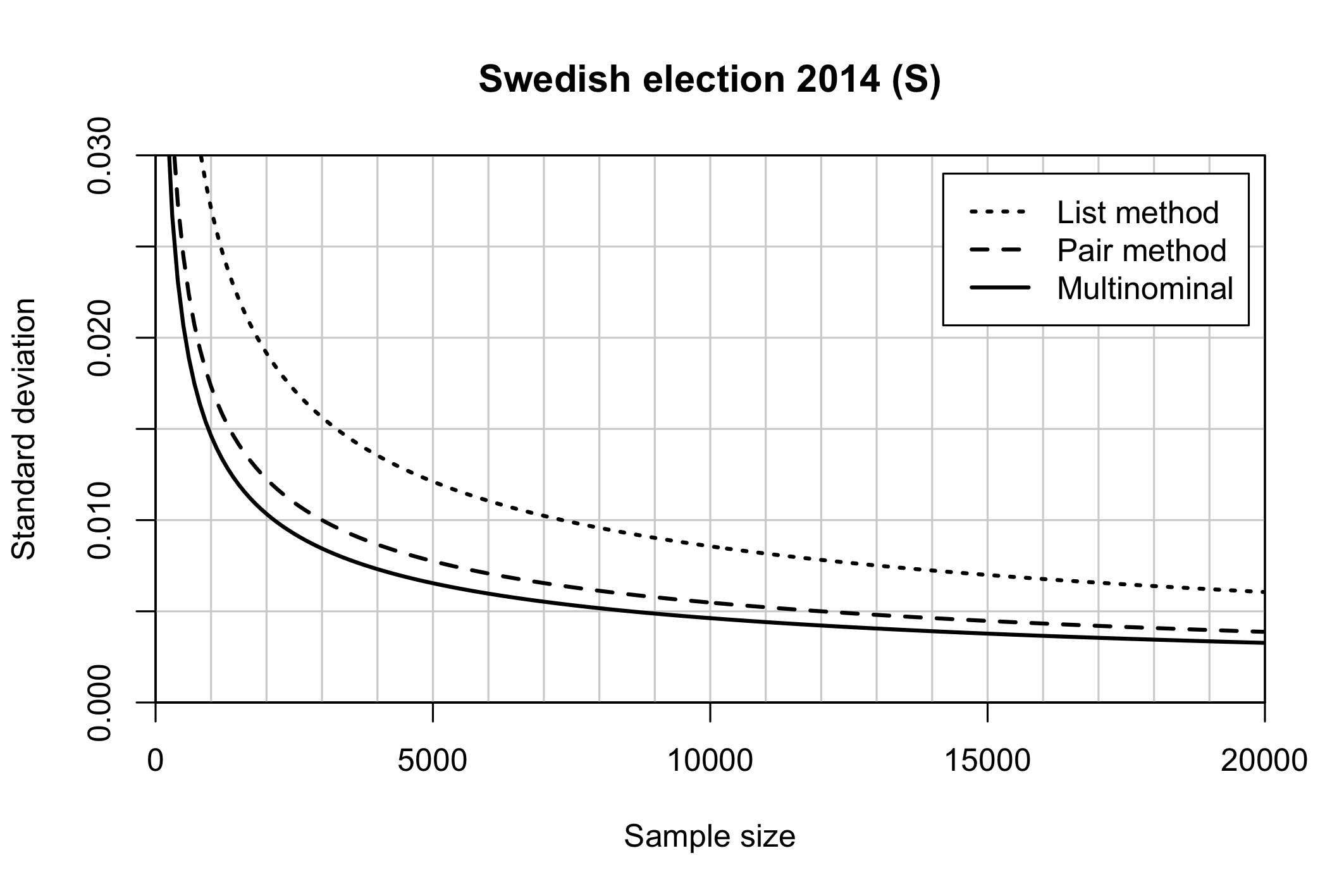}
\caption{Standard deviation for the estimate of Social Democrats' support as a function of sample size in the case of true support equal to the 2014 election (31.0\%).}
\label{s_val_s}
\end{figure}

\begin{figure}[h]
\center
\includegraphics[width=11cm]{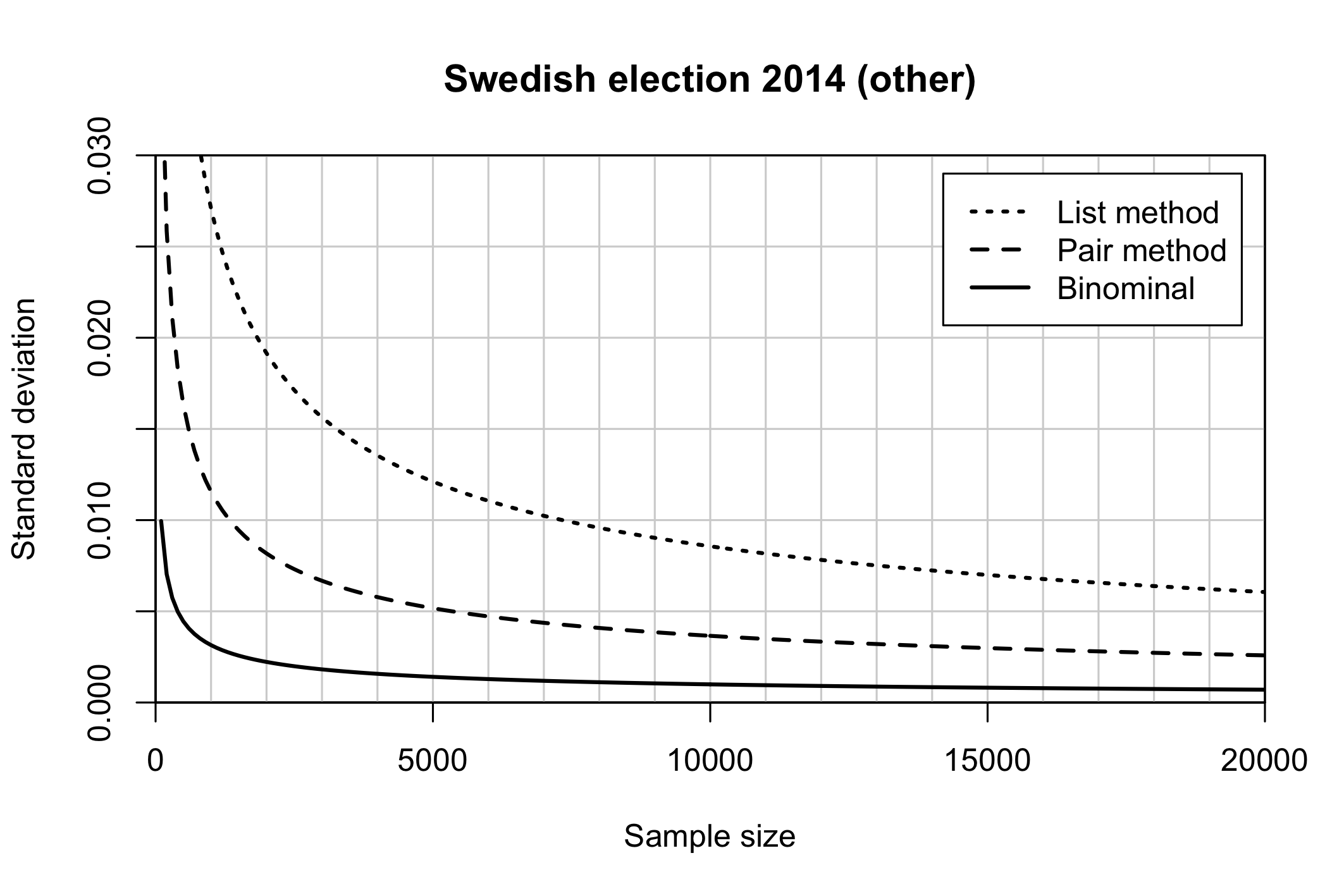}
\caption{Standard deviation for the estimate of minor parties' support as a function of sample size in the case of true support equal to the 2014 election (1\%).}
\label{s_val_ovr}
\end{figure}

\begin{figure}[h]
\center
\includegraphics[width=11cm]{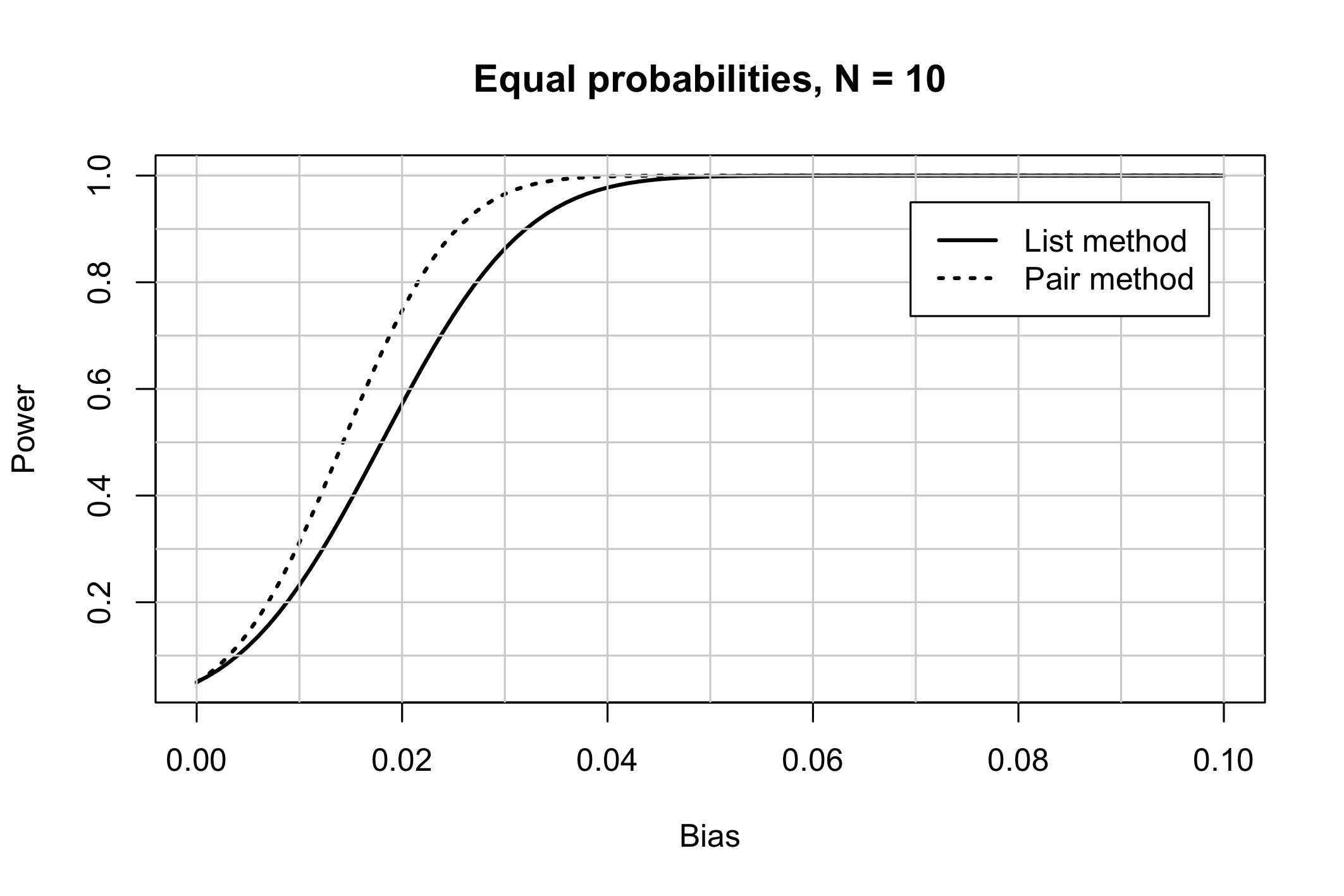}
\caption{Power calculation for bias-detection according to \eqref{eq : power-bias} where the total size of both surveys is $n = 15~ 000$ with $n_{\mathit{List}} = n_{\mathit{Pair}} = 13~ 500$, $p = 10 \%$, $N = 10$. The confidence level (type I error) used is $\gamma = 5 \%$.}
\label{bias_equal_p_nbin1500}
\end{figure}

\begin{figure}[h]
\center
\includegraphics[width=11cm]{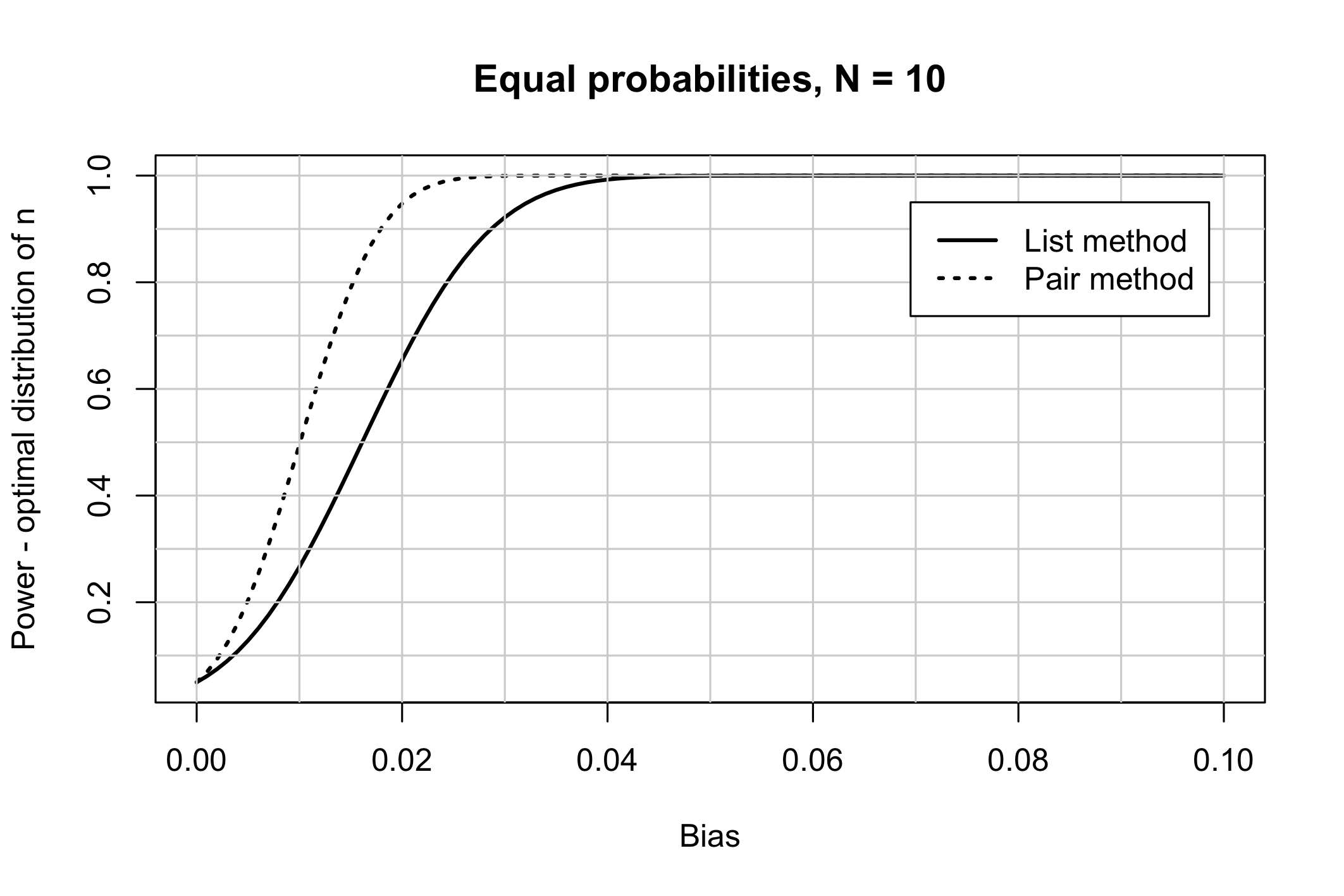}
\caption{Power calculation for bias-detection according to \eqref{eq : power-bias} where the total size of both surveys is $n = 15~ 000$, with optimised allocation between the standard binomial survey and the list and the pair method according to $n_{\mathit{List}} = 11~ 250$ and $n_{\mathit{Pair}} = 9~ 000$, $p = 10 \%$, $N = 10$. The confidence level (type I error) used is $\gamma = 5 \%$.}
\label{bias_equal_p_optim_n}
\end{figure}

\begin{figure}[h]
\center
\includegraphics[width=11cm]{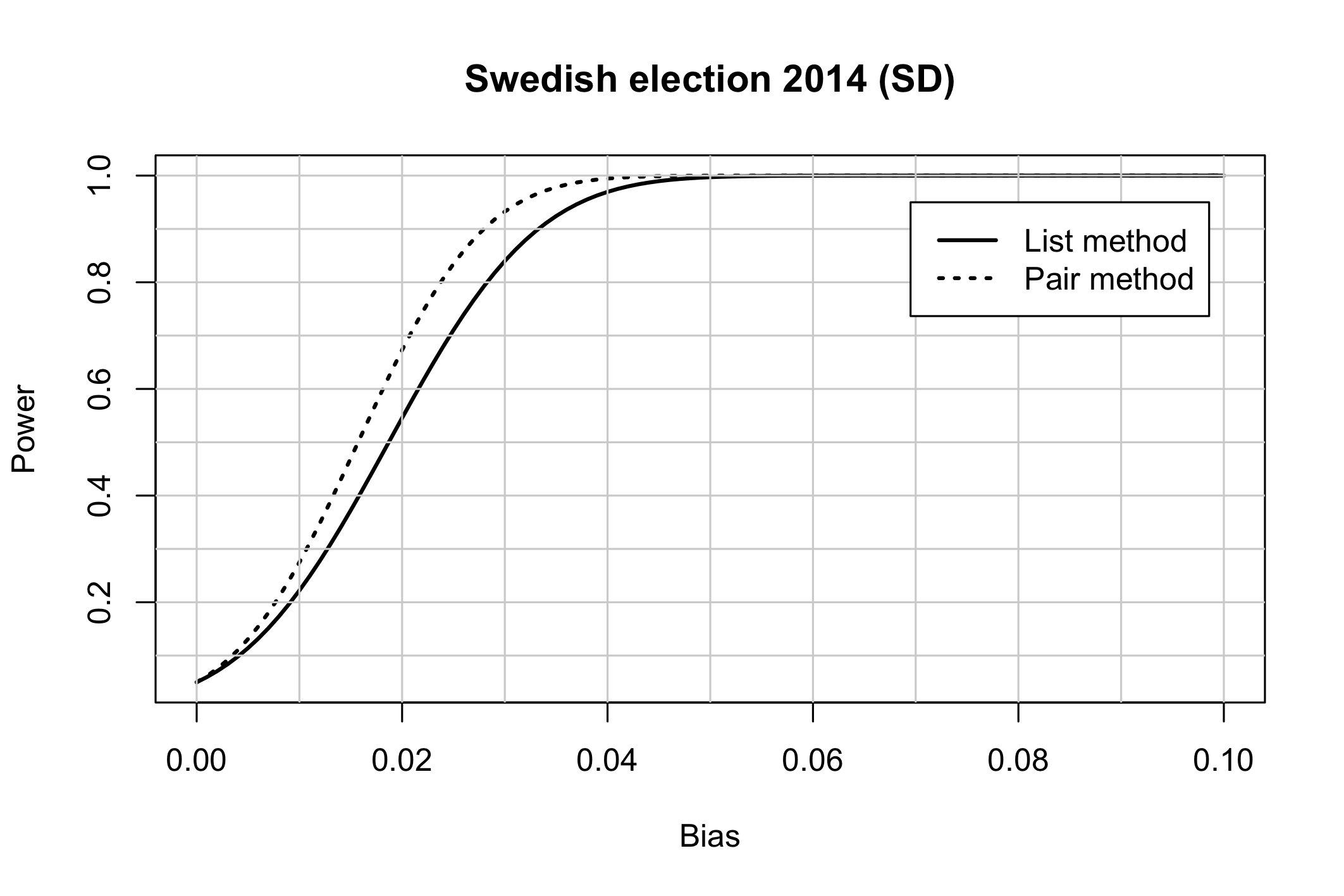}
\caption{Power calculation for bias-detection according to \eqref{eq : power-bias} where the total size of both surveys is $n = 15~ 000$ with $n_{\mathit{List}} = n_{\mathit{Pair}} = 13~ 500$, $p = 12.9 \%$ (SD) Swedish election 2014, $N = 10$. The confidence level (type I error) used is $\gamma = 5 \%$.}
\label{bias_sd_nbin1500}
\end{figure}

\begin{figure}[h]
\center
\includegraphics[width=11cm]{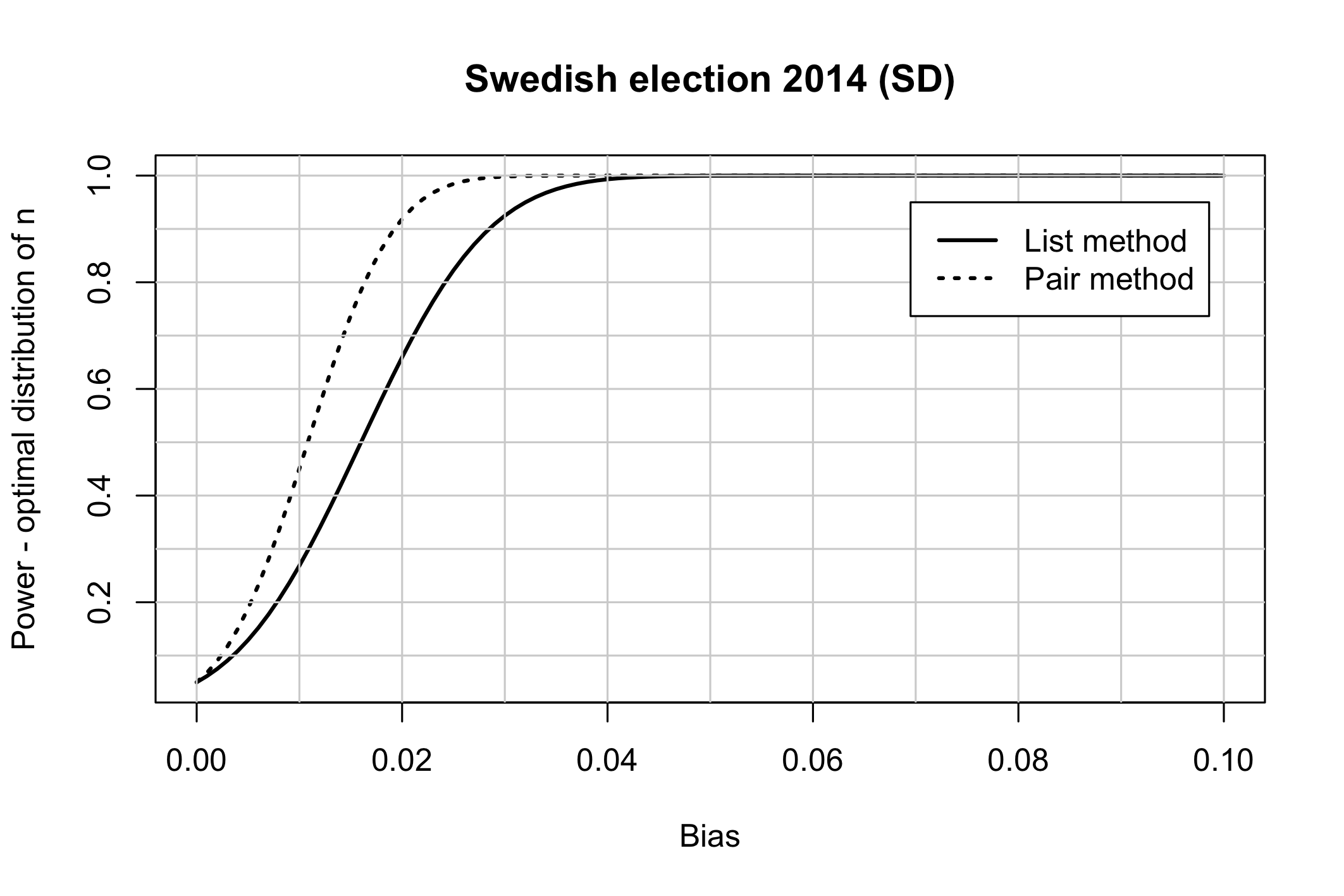}
\caption{Power calculation for bias-detection according to \eqref{eq : power-bias} where the total size of both surveys is $n = 15~ 000$, with optimised allocation between the standard binomial survey and the list and the pair method according to $n_{\mathit{List}} = 10~ 781$ and $n_{\mathit{Pair}} = 8~ 758$, $p = 12.9 \%$ (SD) Swedish election 2014, $N = 10$. The confidence level (type I error) used is $\gamma = 5 \%$.}
\label{bias_sd_optim_n}
\end{figure}

\end{document}